\newtheorem{definition}{\noindent{\it Definition}}[section]
\newtheorem{theorem}{\noindent{\it Theorem}}[section]
\newtheorem{lemma}[theorem]{\noindent{\it Lemma}}
\newtheorem{remark}[theorem]{\noindent{\it Remark}}
\newenvironment{proof}{\noindent{\it Proof:}}{$\hfill$ $\Box$\\ }
\begin{document}

\title{On Negacyclic MDS-Convolutional Codes}

 \author{Giuliano G. La Guardia
\thanks{Giuliano Gadioli La Guardia is with Department of Mathematics and Statistics,
State University of Ponta Grossa (UEPG), 84030-900, Ponta Grossa,
PR, Brazil. }}

\maketitle

\begin{abstract}
New families of classical and quantum optimal negacyclic
convolutional codes are constructed in this paper. This optimality
is in the sense that they attain the classical (quantum)
generalized Singleton bound. The constructions presented in this
paper are performed algebraically and not by computational search.
\end{abstract}

\section{Introduction}\label{Intro}

Much effort have been paid in order to construct good quantum
error-correcting codes (QECC)
\cite{Calderbank:1998,Steane:1999,Nielsen:2000,Ashikmin:2001,Ketkar:2006,LaGuardia:2009,LaGuardia:2011}
as well as quantum convolutional codes with good parameters
\cite{Ollivier:2003,Ollivier:2004,Aido:2004,Grassl:2005,Grassl:2006,Grassl:2007,Aly:2007,Klapp:2007,Forney:2007,LaGuardia:2013II}.
On the other hand, the investigation of the class of (classical)
convolutional codes and their corresponding properties as well as
constructions of maximum-distance-separable (MDS) convolutional
codes (i.e., codes attaining the generalized Singleton bound
\cite{Rosenthal:1999}) have also been presented in the literature
\cite{Forney:1970,Lee:1976,Piret:1988,York:1999,Rosenthal:1999,Hole:2000,Rosenthal:2001,Hutchinson:2005,Gluesing:2006,Schmale:2006,Luerssen:2006,Luerssen:2008,Climent:2008,Iglesias:2009,LaGuardia:2013I,LaGuardia:2013II}.

In this paper, we utilize the class of negacyclic codes
\cite{Berlekamp:1967,Krishna:1990,Aydin:2001,Blackford:2008,Liu:2012,Bakshi:2012}
in order to construct classical and quantum MDS convolutional
codes. More precisely, we apply the famous method proposed by
Piret \cite{Piret:1988} (generalized recently by Aly \emph{et al.}
\cite{Aly:2007}), which consists in the construction of
(classical) convolutional codes derived from block codes. An
advantage of our techniques of construction lie in the fact that
all new (classical and quantum) convolutional codes are generated
algebraically and not by computational search, in contrast with
many works where only exhaustively computational search or even
specific codes are constructed.

Our classical convolutional MDS codes constructed here have
parameters

\begin{itemize}
\item $(n, n-2i+1, 2 ; 1, 2i+2 {)}_{q^{2}}$, where $q\equiv 1$
(mod $4$) is a power of an odd prime, $n=q^{2}+1$ and $2\leq i
\leq n/2- 1$;

\item $(n, n-2i+2, 2; 1, 2i+1)_{q^{2}}$, where $q$ is a power of
an odd prime, $n=(q^{2}+1)/2$ and $2\leq i\leq (n-1)/2$;

\item $(n, n-2i+1, 2 ; 1, 2i+2 {)}_{q^{2}}$, where $q\geq 5$ is a
power of an odd prime, $n=(q^{2}+1)/2$ and $2\leq i\leq
(n-1)/2-1$.
\end{itemize}

The new convolutional stabilizer MDS codes have parameters

\begin{itemize}
\item ${[(n, n-4i+2, 1; 2, 2i+2)]}_{q}$, where $q\equiv 1$ (mod
$4$) is a power of an odd prime, $n=q^{2}+1$ and $2\leq i \leq
(q-1)/2$;

\item ${[(n, n-4i+4, 1; 2, 2i+1)]}_{q}$, where $q \geq 7$ is a
power of an odd prime, $n=(q^{2}+1)/2$ and $2\leq i \leq (q-1)/2$.
\end{itemize}

We observe that the order between the degree and the memory are
changed when comparing the parameters of classical and quantum
convolutional codes. We adopt this notation to keep the same
notation utilized in \cite{Aly:2007}.

The paper is organized as follows. In Sections~\ref{II} we review
basic concepts on negacyclic codes. In Sections~\ref{III} and
\ref{IV}, we review of concepts concerning classical and quantum
convolutional codes, respectively. In Section~\ref{V}, we propose
constructions of new families of classical MDS convolutional
derived from negacyclic codes. In Section~\ref{VI} we construct
new optimal (MDS) quantum convolutional codes and, in
Section~\ref{VII}, a brief summary of this work is described.

\section{Negacyclic codes}\label{II}

The class of negacyclic codes
\cite{Berlekamp:1967,Krishna:1990,Aydin:2001,Blackford:2008,Liu:2012,Bakshi:2012,Kai:2013A}
have been studied in the literature. This class of codes are a
particular class of a more general class of constacyclic codes
\cite{Blackford:2008}. In this section we review the basic
concepts of these codes.

Throughout this paper, we always assume that $q$ is a power of an
odd prime, ${ \mathbb F}_{q}$ is a finite field with $q$ elements
and $n$ is a positive integer with $\gcd (n, q)=1$. Analogously to
cyclic codes, if we consider the quotient ring $R_{n}={\mathbb
F}_{q}/(x^{n}+1)$, then a negacyclic code is a principal ideal of
$R_{n}$ under the usual correspondence ${\bf c} = (c_0 , c_1 ,
\ldots , c_{n-1})\longrightarrow c_0 + c_1 x + \ldots + c_{n-1}
x^{n-1}$ (mod$(x^{n}+1)$). The generator polynomial $g(x)$ of a
negacyclic code $C$ satisfies $g(x)|(x^{n}+1)$. The roots of
$(x^{n}+1)$ are the roots of $(x^{2n}-1)$ which are not roots of
$(x^{n}-1)$ in some extension field of ${ \mathbb F}_{q^{2}}$
(since we will work with codes endowed with the Hermitian inner
product).

Consider that $m= \ {{ord}_{2n}}(q^{2})$ and let $\beta$ be a
primitive $2n$th root of unity in ${\mathbb F}_{q^{2m}}$ (so
$\alpha = {\beta}^{2} \in {\mathbb F}_{q^{2m}}$ is a primitive
$n$th root of unity). Then the roots of $x^{n} + 1$ are given by
${\beta}^{2i+1}$ $0 \leq i \leq n-1$. Put ${\mathbb O}_{2n}= \{1,
3, \ldots, 2n-1\}$; the defining set of a negacyclic code $C$ of
length $n$ generated by $g(x)$ is given by ${ \mathcal Z} = \{i
\in {\mathbb O}_{2n}| {\beta}^{i} \ \ is \ \ root \ \ of \ \
g(x)\}$. The defining set is a union of $q^{2}$-ary cyclotomic
cosets given by ${\mathcal C}_{i} = \{i, iq^{2}, \ldots, i
q^{2(m_{i}-1)} \}$, where $m_{i}$ is the smallest positive integer
such that $iq^{2(m_{i})}\equiv i$ (mod $2n$). The minimal
polynomial (over ${\mathbb F}_{q^{2}}$) of ${\beta}^{j} \in {
\mathbb F}_{q^{2m}}$ is denoted by ${M}^{(j)}(x)$ and it is given
by ${M}^{(j)}(x)=\displaystyle \prod_{j \in {\mathcal{C}}_{i}}
(x-{\beta}^{j})$. The dimension of $C$ is given by $n - |{
\mathcal Z}|$. The BCH bound for Constacyclic codes (see
\cite{Krishna:1990,Aydin:2001}) asserts that is $C$ is a
$q^{2}$-ary negacyclic code of length $n$ with generator
polynomial $g(x)$ and if $g(x)$ has the elements $\{
{\beta}^{2i+1}| 0 \leq i\leq  d-2 \}$ as roots, where $\beta$ is a
primitive $2n$th root of unity, then the minimum distance $d_{C}$
of $C$ satisfies $d_{C}\geq d$.

\section{Classical Convolutional Codes}\label{III}

The class of (classical) convolutional codes is a well-studied
class of codes
\cite{Forney:1970,Piret:1988,Johannesson:1999,Huffman:2003,Aly:2007,Klapp:2007}.
We assume the reader is familiar with the theory of convolutional
codes. For more details, see \cite{Johannesson:1999}. Recall that
a polynomial encoder matrix $G(D)=(g_{ij}) \in { \mathbb
F}_{q}{[D]}^{k \times n}$ is called \emph{basic} if $G(D)$ has a
polynomial right inverse. A basic generator matrix is called
\emph{reduced} (or minimal
\cite{Rosenthal:2001,Huffman:2003,Luerssen:2008}) if the overall
constraint length $\gamma =\displaystyle\sum_{i=1}^{k}
{\gamma}_i$, where ${\gamma}_i = {\max}_{1\leq j \leq n} \{ \deg
g_{ij} \}$, has the smallest value among all basic generator
matrices (in this case the overall constraint length $\gamma$ will
be called the \emph{degree} of the resulting code).

\begin{definition}\cite{Klapp:2007}
A rate $k/n$ \emph{convolutional code} $C$ with parameters $(n, k,
\gamma ; \mu,$ $d_{f} {)}_{q}$ is a submodule of ${ \mathbb F}_q
{[D]}^{n}$ generated by a reduced basic matrix $G(D)=(g_{ij}) \in
{ \mathbb F}_q {[D]}^{k \times n}$, that is, $C = \{ {\bf
u}(D)G(D) | {\bf u}(D)\in { \mathbb F}_{q} {[D]}^{k} \}$, where
$n$ is the length, $k$ is the dimension, $\gamma
=\displaystyle\sum_{i=1}^{k} {\gamma}_i$ is the \emph{degree},
$\mu = {\max}_{1\leq i\leq k}\{{\gamma}_i\}$ is the \emph{memory}
and $d_{f}=$wt$(C)=\min \{wt({\bf v}(D)) \mid {\bf v}(D) \in C,
{\bf v}(D)\neq 0 \}$ is the \emph{free distance} of the code.
\end{definition}

The Hermitian inner product is defined as $\langle {\bf u}(D)\mid
{\bf v}(D){\rangle}_{h} = {\sum}_i {\bf u}_i \cdot {\bf v}_i^{q}$,
where ${\bf u}_i, {\bf v}_i\in { \mathbb F}_{q^{2}}^{n}$ and ${\bf
v}_i^{q} = (v_{1i}^{q}, \ldots, v_{ni}^{q})$. The Hermitian dual
of the code $C$ is defined by $C^{{\perp}_{h} }=\{ {\bf u}(D) \in
{ \mathbb F}_{q^{2}} {[D]}^{n}\mid \langle {\bf u}(D)\mid {\bf
v}(D){\rangle}_{h} = 0$ for all ${\bf v}(D)\in C\}$.

Let $C$ an ${[n, k, d]}_{q}$ block code with parity check matrix
$H$. We split $H$ into $\mu+1$ disjoint submatrices $H_i$ such
that $H = \left[
\begin{array}{c}
H_0\\
H_1\\
\vdots\\
H_{\mu}\\
\end{array}
\right]$, where each $H_i$ has $n$ columns, obtaining the
polynomial matrix $G(D) =  {\tilde H}_0 + {\tilde H}_1 D + {\tilde
H}_2 D^2 + \ldots + {\tilde H}_{\mu} D^{\mu}$, where the matrices
${\tilde H}_i$, for all $1\leq i\leq \mu$, are derived from the
respective matrices $H_i$ by adding zero-rows at the bottom in
such a way that the matrix ${\tilde H}_i$ has $\kappa$ rows in
total, where $\kappa$ is the maximal number of rows among the
matrices $H_i$. As it is well known, the matrix $G(D)$ generates a
convolutional code. Note that $\mu$ is the memory of the resulting
convolutional code generated by the matrix $G(D)$.

\begin{theorem}\cite[Theorem 3]{Aly:2007}\label{A}
Let $C \subseteq { \mathbb F}_q^n$ be a linear code with
parameters ${[n, k, d]}_{q}$ and assume also that $H \in { \mathbb
F}_q^{(n-k)\times n}$ is a parity check matrix for $C$ partitioned
into submatrices $H_0, H_1, \ldots, H_{\mu}$ as above such that
$\kappa =$ rk$H_0$ and rk$H_i \leq \kappa$
for $1 \leq i\leq \mu$ and consider the polynomial matrix $G(D)$ as given above. Then we have:\\
(a) The matrix $G(D)$ is a reduced basic generator matrix;\\
(b) If $C^{\perp}\subset C$ (resp. ${C}^{{\perp}_{h}}\subset C$),
then the convolutional code $V = \{ {\bf v}(D) = {\bf u}(D)G(D)
\mid {\bf u}(D)\in { \mathbb F}_q^{n-k} [D] \}$ satisfies $V \subset V^{\perp}$
(resp. $V \subset {V}^{{\perp}_{h}}$);\\
(c) If $d_f$ and $d_f^{\perp}$ denote the free distances of $V$
and $V^{\perp}$, respectively, $d_i$ denote the minimum distance
of the code $C_i = \{ {\bf v}\in { \mathbb F}_q^n \mid {\bf v}
{\tilde H}_i^t =0 \}$ and $d^{\perp}$ is the minimum distance of
$C^{\perp}$, then one has $\min \{ d_0 + d_{\mu} , d \} \leq
d_f^{\perp} \leq  d$ and $d_f \geq d^{\perp}$.
\end{theorem}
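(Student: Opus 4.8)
The plan is to prove each of the four claimed bounds essentially by unwinding the definitions of the codes $V$, $V^\perp$, $C_i$ and their distances, and by exploiting the block structure of $G(D)$ together with part (a), which guarantees $G(D)$ is reduced basic (so that the truncation / constraint-length machinery applies). I would treat the two inequalities on $d_f^\perp$ first, then the single inequality $d_f \ge d^\perp$.

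First I would establish $d_f^\perp \le d$. The key observation is that each row of $G(D) = \tilde H_0 + \tilde H_1 D + \cdots + \tilde H_\mu D^\mu$ has, read coefficient-by-coefficient, weight at most the weight of the corresponding row of $H = [H_0^t\ H_1^t\ \cdots\ H_\mu^t]^t$ (the added rows are zero and do not contribute), so one can find a codeword of $V$ whose total weight equals a row weight of $H$. Since $H$ is a parity-check matrix of an $[n,k,d]_q$ code, $H$ may be chosen so that its row space is $C^\perp$, and a minimum-weight nonzero vector of $C^\perp$ realizes weight $d^\perp \le$ something — but more directly, the standard argument is: a single row of $H$, placed appropriately, gives a codeword of $V^\perp$ whose constituent coefficient-vectors together lie in $C^\perp$ and whose support, spread over the $D$-powers, has total weight equal to that row's weight; choosing the row-splitting so that some row has weight $\le d$... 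Actually the cleaner route, which I would follow, is: $V^\perp$ contains (as "polynomials of degree $0$") every codeword of $C^\perp$ viewed constant in $D$? No — one checks instead that $V \subseteq C[D]$ in the sense that every constant-in-$D$ codeword of the block code $C$ lies in $V^\perp$, hence $d_f^\perp \le \operatorname{wt}(C) = d$. I expect the precise bookkeeping of \emph{which} block code embeds in \emph{which} convolutional code to be the first place where care is needed, and I would lean on the construction exactly as in \cite{Aly:2007}.

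Next I would prove the lower bound $d_f^\perp \ge \min\{d_0 + d_\mu,\ d\}$. Take a nonzero $\mathbf v(D) \in V^\perp$; write $\mathbf v(D) = \sum_{t} \mathbf v_t D^t$ with $\mathbf v_t \in \mathbb F_q^n$, and let $t_0$ (resp. $t_1$) be the least (resp. greatest) index with $\mathbf v_t \ne 0$. By definition of $V^\perp$, the vector $\mathbf v(D)$ is orthogonal to every shift of every row of $G(D)$; extracting the $D^{t_0}$-coefficient of the orthogonality relation forces $\mathbf v_{t_0} \tilde H_0^t = 0$, i.e. $\mathbf v_{t_0} \in C_0$, so $\operatorname{wt}(\mathbf v_{t_0}) \ge d_0$; symmetrically, extracting the top coefficient gives $\mathbf v_{t_1} \tilde H_\mu^t = 0$, so $\operatorname{wt}(\mathbf v_{t_1}) \ge d_\mu$. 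If $t_0 \ne t_1$ these supports are disjoint across distinct powers of $D$, giving total weight $\ge d_0 + d_\mu$. If $t_0 = t_1$, then $\mathbf v(D) = \mathbf v_{t_0} D^{t_0}$ is, up to shift, a single vector annihilated by all of $H_0,\dots,H_\mu$, hence by $H$ itself, so it lies in $C$ and has weight $\ge d$. Combining the two cases yields the bound. The main obstacle here is justifying rigorously that the lowest- and highest-degree coefficients of a parity-check-type condition decouple in exactly this way; this is where the reduced-basic property from part (a) (and the fact that $\kappa = \operatorname{rk} H_0$) is used, to ensure no degeneracy collapses the degree.

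Finally, for $d_f \ge d^\perp$: here I would use part (b) in the Euclidean/Hermitian-neutral form of the construction, or argue directly that $V \subseteq C^{\perp}[D]$ coefficient-wise is false but that every nonzero $\mathbf v(D) = \mathbf u(D) G(D) \in V$ has each $D$-coefficient lying in the row space of $H$, i.e. in $C^\perp$; since $\mathbf v(D) \ne 0$ some coefficient $\mathbf v_t \ne 0$, and $\mathbf v_t \in C^\perp$ forces $\operatorname{wt}(\mathbf v_t) \ge d^\perp$, whence $\operatorname{wt}(\mathbf v(D)) \ge \operatorname{wt}(\mathbf v_t) \ge d^\perp$. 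The one subtlety is confirming that each coefficient of a codeword of $V$ is genuinely a vector in $C^\perp$ and not merely in some larger space — this follows because every $\tilde H_i$ has its nonzero rows among the rows of $H$, so any $\mathbb F_q$-combination of rows of the $\tilde H_i$'s (which is what a coefficient of $\mathbf u(D)G(D)$ is) is an $\mathbb F_q$-combination of rows of $H$, hence an element of $C^\perp$. I would present these four arguments in this order, flag the decoupling-of-extreme-coefficients step in the $d_f^\perp$ lower bound as the crux, and otherwise cite \cite{Aly:2007} for the routine parts.
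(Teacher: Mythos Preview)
The paper does not give its own proof of this statement: Theorem~\ref{A} is quoted verbatim from \cite[Theorem~3]{Aly:2007} and used as a black box, so there is nothing in the present paper to compare your argument against. Your outlined proof of part~(c) is, modulo the hesitation in the $d_f^{\perp}\le d$ step, the standard argument (and is essentially the one in \cite{Aly:2007}): constant codewords of $C$ embed in $V^{\perp}$ to give the upper bound; the lowest- and highest-degree coefficients of any nonzero $\mathbf v(D)\in V^{\perp}$ are annihilated by $\tilde H_0$ and $\tilde H_\mu$ respectively, giving the lower bound; and every $D$-coefficient of a word in $V$ lies in the row space of $H$, hence in $C^{\perp}$, giving $d_f\ge d^{\perp}$. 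You should clean up the first step---your back-and-forth there is unnecessary once you fix the duality convention (module/Laurent dual, so that a constant $c$ lies in $V^{\perp}$ iff $c\tilde H_i^t=0$ for all $i$, i.e.\ iff $c\in C$)---but otherwise the plan is sound.
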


Recall that the (classical) generalized Singleton bound
\cite[Theorem 2.2]{Rosenthal:1999} of an $(n, k, \gamma ; \mu,
d_{f} {)}_{q}$ convolutional code is given by
\begin{eqnarray}
d_{f}\leq (n-k)[ \lfloor \gamma/k \rfloor + 1 ] + \gamma +1
\end{eqnarray}
If the parameters of a convolutional code $C$ satisfies (1) with
equality then $C$ is said maximum-distance-separable (MDS).

%%%%%%%%%%%%%%%%%%%%%%%%%%%%%%%%%%%%%%%%%%%%%%%%%%%%%%%%%%%%%%%%%%%%%%%%%%%%%%%%%%%%%%%%%%%%%%%%%
%%%%%%%%%%%%%%%%%%%%%%%%%%%%%%%%%%%%%%%%%%%%%%%%%%%%%%%%%%%%%%%%%%%%%%%%%%%%%%%%%%%%%%%%%%%%%%%%%
%%%%%%%%%%%%%%%%%%%%%%%%%%%%%%%%%%%%%%%%%%%%%%%%%%%%%%%%%%%%%%%%%%%%%%%%%%%%%%%%%%%%%%%%%%%%%%%%%

\section{Quantum Convolutional Codes}\label{IV}

A quantum convolutional code is defined by means of its
stabilizer, which is a subgroup of the infinite version of the
Pauli group, consisting of tensor products of generalized Pauli
matrices acting on a semi-infinite stream of qudits. The
stabilizer can be defined by a stabilizer matrix of the form
$$S(D) = ( X(D)\mid Z(D)) \in { \mathbb F}_{q}{[D]}^{(n-k)\times 2n}$$
satisfying $X(D){Z(1/D)}^{t} - Z(D){X(1/D)}^{t}=0$ (symplectic
orthogonality). More precisely, consider a quantum convolutional
code $C$ defined by a full-rank stabilizer matrix $S(D)$ given
above. Then $C$ is a rate $k/n$ code with parameters $[(n, k, \mu;
\gamma, d_{f} ){]}_{q}$, where $n$ is the frame size, $k$ is the
number of logical qudits per frame, $\mu = {\max}_{1\leq i\leq
n-k, 1\leq j\leq n} \{ \max \{ \deg {X}_{ij}(D), \deg {Z}_{ij}
(D)\} \}$ is the memory, $d_{f}$ is the free distance and $\gamma$
is the degree of the code. Similarly as in the classical case, the
constraint lengths are defined as ${\gamma}_{i} = {\max}_{1\leq
j\leq n}$ $\{ \max \{\deg X_{ij}(D), \deg Z_{ij}(D)\} \}$, and the
overall constraint length is defined as $\gamma
=\displaystyle\sum_{i=1}^{n-k} {\gamma}_{i}$.

Next, let ${\mathbb H} = {\mathbb C}^{q^n} = {\mathbb C}^{q}
\otimes \ldots \otimes {\mathbb C}^{q}$ be the Hilbert space and
$\mid$$x \rangle$ be the vectors of an orthonormal basis of
${\mathbb C}^{q}$, where the labels $x$ are elements of ${ \mathbb
F}_{q}$. Consider $a, b \in { \mathbb F}_{q}$ and take the unitary
operators $X(a)$ and $Z(b)$ in ${\mathbb C}^{q}$ defined by
$X(a)$$\mid$$x \rangle =$$\mid$$x + a\rangle$ and $Z(b)$$\mid$$x
\rangle = w^{tr(bx)}$$\mid$$x\rangle$, respectively, where $w=\exp
(2\pi i/ p)$ is a primitive $p$-th root of unity, $p$ is the
characteristic of ${ \mathbb F}_{q}$ and $tr$ is the trace map
from ${ \mathbb F}_{q}$ to ${ \mathbb F}_{p}$. Considering the
\emph{error basis} ${\mathbb E} = \{X(a), Z(b) | a, b \in {
\mathbb F}_{q} \}$, one defines the set $P_{\infty}$ (according to
\cite{Klapp:2007}) as the set of all infinite tensor products of
matrices $N\in \langle M\mid M \in {\mathbb E} \rangle$, in which
all but finitely many tensor components are equal to $I$, where
$I$ is the $q\times q$ identity matrix. Then one defines the
\emph{weight} wt of $A\in P_{\infty}$ as its (finite) number of
nonidentity tensor components. In this context, one says that a
quantum convolutional code has free distance $d_{f}$ if and only
if it can detect all errors of weight less than $d_{f}$, but
cannot detect some error of weight $d_{f}$. The code $C$ is
\emph{pure} if does not exist errors of weight less than $d_{f}$
in the stabilizer of $C$.

%%%%%%%%%%%%%%%%%%%%%%%%%%%%%%%%%%%%%%%%%%%%%%%%%%%%%%%%%%%%%%%%%%%%%%%%%%%%%%%%%%%%%%%
%%%%%%%%%%%%%%%%%%%%%%%%%%%%%%%%%%%%%%%%%%%%%%%%%%%%%%%%%%%%%%%%%%%%%%%%%%%%%%%%%%%%%%%
%%%%%%%%%%%%%%%%%%%%%%%%%%%%%%%%%%%%%%%%%%%%%%%%%%%%%%%%%%%%%%%%%%%%%%%%%%%%%%%%%%%%%%%

\section{The New Convolutional MDS Codes}\label{V}

In this section we propose the construction of new classical
convolutional codes. In order to proceed further, let us recall
some results shown in the literature:

\begin{lemma}\label{kai1}\cite[Lemma 4.1]{Kai:2013}
Let $n=q^{2}+1$, where $q\equiv 1$ (mod $4$) is a power of an odd
prime and suppose that $s=n/2$. Then the $q^{2}$-ary cosets modulo
$2n$ are given by: ${\mathcal C}_{s}=\{ s\}$, ${\mathcal
C}_{3s}=\{ 3s\}$ and ${\mathcal C}_{s-2i}=\{ s-2i, s+2i\}$, where
$1\leq i\leq s-1$.
\end{lemma}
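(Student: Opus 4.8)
The plan is to determine explicitly the $q^2$-ary cyclotomic cosets modulo $2n$ where $n = q^2+1$, using the structure of multiplication by $q^2$ on the set $\mathbb{O}_{2n}$ of odd residues. First I would record the key arithmetic fact: since $n = q^2+1$, we have $q^2 \equiv -1 \pmod{n}$, and more importantly $q^2 \equiv 2n - 1 \equiv -1 \pmod{2n}$ only if we are careful — actually $q^2 = n - 1$, so modulo $2n$ we have $q^2 \equiv n-1$. The cleaner route is to observe $2n = 2q^2 + 2$, so $q^2 \equiv -1 \pmod{n}$ and hence for any odd $j$, $jq^2 \equiv -j \pmod{n}$; combined with the parity bookkeeping modulo $2n$ this pins down the cosets. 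I would treat the three claimed cosets separately.

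For the singleton cosets, take $j = s = n/2$ (this is an integer and is odd because $q \equiv 1 \pmod 4$ forces $n = q^2+1 \equiv 2 \pmod 8$... let me just say $s$ is odd, which one checks from $q \equiv 1 \pmod 4$). Then $sq^2 = s(n-1) = sn - s \equiv sn - s \pmod{2n}$; since $s = n/2$, $sn = n^2/2$, and one computes $sn \equiv 0$ or $n \pmod{2n}$ according to the parity of $n/2 \cdot$ something — the upshot is $sq^2 \equiv s \pmod{2n}$, so $\mathcal{C}_s = \{s\}$. An entirely analogous computation handles $3s$: one shows $3sq^2 \equiv 3s \pmod{2n}$, giving $\mathcal{C}_{3s} = \{3s\}$. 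Here $3s$ must be interpreted as its representative in $\mathbb{O}_{2n}$, i.e. $3s - 2n = 3n/2 - 2n$ if needed, but since $3s = 3n/2 < 2n$ it is already in range and odd.

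For the general coset, fix $i$ with $1 \le i \le s-1$ and consider $j = s - 2i$. This is odd (odd minus even) and lies in $\{1,\dots,2n-1\}$ after reduction. I compute $jq^2 = (s-2i)(n-1) = (s-2i)n - (s-2i) \equiv (s-2i)n - s + 2i \pmod{2n}$. Using $sn = n^2/2 \equiv \pm$ (something) $\pmod{2n}$ and $2in \equiv 0 \pmod{2n}$ when... wait, $2in$ is a multiple of $2n$ only if $i$ is a multiple of... no: $2in = 2n \cdot i$, which is $\equiv 0 \pmod{2n}$. Good, so $(s-2i)n = sn - 2in \equiv sn \pmod{2n}$, and $sn \pmod{2n}$ is the same constant $c$ appearing in the $\mathcal{C}_s$ computation where we found $sq^2 = sn - s \equiv s$, forcing $c \equiv 2s = n \pmod{2n}$. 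Therefore $jq^2 \equiv n - s + 2i = s + 2i \pmod{2n}$ (using $n - s = s$). Then one more multiplication: $(s+2i)q^2 \equiv n - (s+2i) = s - 2i = j \pmod{2n}$, so the coset closes after two steps, yielding $\mathcal{C}_{s-2i} = \{s-2i,\; s+2i\}$. I would also verify these $2(s-1)$ elements together with $s$ and $3s$ exhaust $\mathbb{O}_{2n}$ by a cardinality count: $2(s-1) + 2 = 2s = n = |\mathbb{O}_{2n}|$, confirming no coset is missed and the $\mathcal{C}_{s-2i}$ are genuinely distinct and of size exactly $2$.

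The main obstacle is purely bookkeeping: getting the reductions modulo $2n$ right (as opposed to modulo $n$), in particular tracking the contribution of the $sn$ term and confirming the parity/oddness of $s$, $3s$, $s \pm 2i$ so that all representatives legitimately lie in $\mathbb{O}_{2n}$. Once the identity $sn \equiv n \pmod{2n}$ is established — equivalently that $s = n/2$ is odd, which is exactly where the hypothesis $q \equiv 1 \pmod 4$ is used — everything else is a short direct verification. No deep machinery is needed beyond the definition of $q^2$-ary cyclotomic cosets recalled in Section~\ref{II}.
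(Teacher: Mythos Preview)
Your argument is correct in substance. The key computation is that $q^{2}\equiv n-1\pmod{2n}$ and $sn\equiv n\pmod{2n}$ (equivalently, $s=n/2$ is odd), from which $(s\pm 2i)q^{2}\equiv s\mp 2i\pmod{2n}$ follows in one line; the cardinality count $2(s-1)+2=n=|\mathbb{O}_{2n}|$ then shows the listed cosets exhaust $\mathbb{O}_{2n}$ without overlap.

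There is, however, nothing to compare your proof against in this paper: Lemma~\ref{kai1} is quoted verbatim from \cite[Lemma~4.1]{Kai:2013} and is not proved here. Your direct verification is exactly the natural (and essentially only) way to establish such a coset structure, and is what one finds in the cited reference.

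One small correction to your bookkeeping remark: the oddness of $s$ does \emph{not} actually rely on the hypothesis $q\equiv 1\pmod 4$. For any odd $q$ one has $q^{2}\equiv 1\pmod 8$, hence $n=q^{2}+1\equiv 2\pmod 8$ and $s=n/2$ is odd. So the identity $sn\equiv n\pmod{2n}$, and with it the entire coset description, holds for every odd prime power $q$; the congruence condition on $q$ is used elsewhere in \cite{Kai:2013} but is not needed for this particular lemma.
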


\begin{lemma}\label{kai2}\cite[Lemma 4.4]{Kai:2013}
Let $n=(q^{2}+1)/2$, where $q$ is a power of an odd prime. Then
the $q^{2}$-ary cosets modulo $2n$ containing all odd integers
from $1$ to $2n-1$ are given by: ${\mathcal C}_{n}=\{ n\}$, and
${\mathcal C}_{2i-1}=\{ 2i-1, 1-2i\}$, where $1\leq i\leq
(n-1)/2$.
\end{lemma}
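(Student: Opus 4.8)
The plan is to work directly with the $q^2$-ary cyclotomic cosets modulo $2n$, where $2n = q^2+1$, restricted to the odd residues $\mathbb{O}_{2n} = \{1, 3, \ldots, 2n-1\}$. First I would record the elementary fact that multiplication by $q^2$ modulo $2n = q^2+1$ acts as multiplication by $-1$, since $q^2 \equiv -1 \pmod{2n}$. Consequently, for any $a$ we have $a q^2 \equiv -a \pmod{2n}$ and $a q^4 \equiv a \pmod{2n}$, so every cyclotomic coset $\mathcal{C}_a$ has size at most $2$, namely $\mathcal{C}_a = \{a, -a \bmod 2n\}$. This immediately gives the shape of the answer; what remains is to (i) identify when the coset is a singleton, and (ii) check that the listed cosets exhaust all odd residues without repetition.

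For step (i): a coset $\mathcal{C}_a$ is a singleton precisely when $a \equiv -a \pmod{2n}$, i.e. $2a \equiv 0 \pmod{2n}$, i.e. $a \equiv 0 \pmod{n}$. Among the odd residues in $\{1, \ldots, 2n-1\}$ this forces $a = n$ (note $n = (q^2+1)/2$ is odd because $q$ is odd, so $n$ is indeed an admissible representative); the other multiple of $n$ in range, namely $2n$, is even and excluded, and $0$ is not in $\mathbb{O}_{2n}$. Hence $\mathcal{C}_n = \{n\}$ is the unique singleton coset among the odd residues. For the remaining odd residues, writing them as $2i-1$ for $1 \le i \le n$, the coset is $\mathcal{C}_{2i-1} = \{2i-1,\ -(2i-1) \bmod 2n\} = \{2i-1,\ 1-2i\}$, a genuine two-element set whenever $2i - 1 \neq n$; since $n$ is odd, $2i-1 = n$ happens at exactly one value of $i$, which is the singleton case already handled. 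So for $1 \le i \le (n-1)/2$ the cosets $\mathcal{C}_{2i-1} = \{2i-1, 1-2i\}$ are distinct two-element cosets (distinctness follows because the pairs $\{2i-1, 1-2i\}$ for different $i \le (n-1)/2$ involve different residues $2i-1 \le n-2 < n$), and each accounts for two odd residues.

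Finally, for step (ii) — the counting/partition check — I would verify that these cosets partition $\mathbb{O}_{2n}$: there are $n$ odd residues in $\{1, \ldots, 2n-1\}$; the singleton $\{n\}$ uses one of them, and the $(n-1)/2$ two-element cosets use $2 \cdot (n-1)/2 = n-1$ of them, for a total of $n$, with no overlaps by the distinctness argument above. This closes the proof. I do not anticipate a genuine obstacle here — the only point requiring a moment of care is the bookkeeping about parity (that $n$ is odd, so $n$ itself is a legitimate odd representative and exactly one value $2i-1$ coincides with it) and confirming that the index range $1 \le i \le (n-1)/2$ is exactly what is needed to list each two-element coset once; everything else is the standard observation that $q^2 \equiv -1 \pmod{q^2+1}$.
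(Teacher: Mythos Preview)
Your argument is correct: the key observation $q^{2}\equiv -1\pmod{2n}$ immediately gives that each $q^{2}$-ary coset modulo $2n$ has the form $\{a,-a\}$, and the parity/counting bookkeeping you do is exactly what is needed to verify the listed cosets partition $\mathbb{O}_{2n}$. Note that the present paper does not actually prove this lemma---it is quoted verbatim from \cite[Lemma~4.4]{Kai:2013}---so there is no in-paper proof to compare against; your proof is the standard one and matches what one finds in that reference.
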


Recall the concept of negacyclic BCH codes:

\begin{definition}(Negacyclic BCH codes)
Let $q$ be a power of an odd prime with $\gcd (n, q)=1$. Let
$\beta$ be a primitive $2n$th root of unity in ${\mathbb
F}_{q^{m}}$. A negacyclic code $C$ of length $n$ over ${ \mathbb
F}_{q}$ is a BCH code with designed distance $\delta$ if, for some
odd integer $b\geq 1,$ we have
$$g(x)= \operatorname{lcm} \{{M}^{(b)}(x), {M}^{(b+2)}(x), \ldots,
{M}^{[b+2(\delta-2)]}(x)\},$$ i.e., $g(x)$ is the monic polynomial
of smallest degree over ${ \mathbb F}_{q}$ having ${{\alpha}^{b}},
{{\alpha}^{b+2}}\ldots,$ ${{\alpha}^{[b+2(\delta-2)]}}$ as zeros.
Therefore, $c\in C$ if and only if
$c({\alpha}^{b})=c({{\alpha}^{(b+2)}})=\ldots =
c({{\alpha}^{[b+2(\delta-2)]}})=0$. Thus the code has a string of
$\delta - 1$ consecutive odd powers of $\beta$ as zeros.
\end{definition}

\begin{remark}\label{rem1}
Let ${\mathcal B} =\{ b_{1}, \ldots, b_{l}\}$ be a basis of ${
\mathbb F}_{q^{l}}$ over ${ \mathbb F}_{q}$. If $u = (u_1,\ldots
,u_{n}) \in { \mathbb F}_{q^{l}}^{n}$ then one can write the
vectors $u_{i}$, $1\leq i\leq n$, as linear combinations of the
elements of ${\mathcal B}$, that is, $u_{i} = u_{i1}b_{1} +\ldots
+ u_{il}b_{l}$. Consider that $u^{(j)} = (u_{1j},\ldots, u_{nj})$
are vectors in ${ \mathbb F}_{q}^{n}$ with $1\leq j\leq l$. Then,
if $v \in { \mathbb F}_{q}^{n}$, one has $v\cdot u=0$ if and only
if $v \cdot u^{(j)} = 0$ for all $1\leq j\leq l$.
\end{remark}

In the following theorem we construct a parity-check matrix for
negacyclic codes:

\begin{theorem}\label{paritynega}
Assume that $q$ is a power of an odd prime, $\gcd (n, q)=1$, and
$m= \ {{ord}_{2n}}(q)$. Let $\beta$ be a primitive $2n$th root of
unity in ${\mathbb F}_{q^{m}}$. Let $b$ be an odd positive integer
with $1 \leq b\leq 2n-1$. Then a parity-check matrix for the BCH
negacyclic code $C$ of length $n$ and designed distance $\delta$,
generated by the polynomial $g(x)= \operatorname{lcm}
\{{M}^{(b)}(x), {M}^{(b+2)}(x), \ldots,
{M}^{[b+2(\delta-2)]}(x)\}$, is the matrix
\begin{eqnarray*}
H_{\delta , b} = \\=  \left[
\begin{array}{ccccc}
1 & {{\beta}^{b}} & {{\beta}^{2b}} & \cdots & {{\beta}^{(n-1)b}} \\
1 & {{\beta}^{(b+2)}} & {{\beta}^{2(b+2)}} & \cdots & {{\beta}^{(n-1)(b+2)}}\\
1 & {{\beta}^{(b+4)}} & {{\beta}^{2(b+4)}} & \cdots & {{\beta}^{(n-1)(b+4)}}\\
\vdots & \vdots & \vdots & \vdots & \vdots\\
1 & {\beta}^{[b+2(\delta-2)]} & {\beta}^{2[b+2(\delta-2)]} & \cdots & {\beta}^{(n-1)[b+2(\delta-2)]}\\
\end{array}
\right],
\end{eqnarray*}
where each entry is replaced by the corresponding column of $m$
elements from ${ \mathbb F}_{q}$ and then removing any linearly
dependent rows.
\end{theorem}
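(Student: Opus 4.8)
The plan is to show that the displayed Vandermonde-type array (with each $\mathbb F_{q^m}$-entry expanded into a length-$m$ column over $\mathbb F_q$, and redundant rows deleted) is exactly a parity-check matrix for the negacyclic BCH code $C$, i.e., that for $c=(c_0,\ldots,c_{n-1})\in\mathbb F_q^n$ one has $c H_{\delta,b}^t=0$ if and only if $c\in C$. First I would recall, from Section~\ref{II}, that $C$ is the negacyclic code with defining set containing $b,b+2,\ldots,b+2(\delta-2)$, so that $c\in C$ precisely when the associated polynomial $c(x)$ satisfies $c(\beta^{b+2j})=0$ for all $0\le j\le\delta-2$ (these are the roots of $g(x)$, and $g(x)\mid x^n+1$ with $c(x)\equiv 0\bmod g(x)$ iff $c(x)$ vanishes at all roots of $g$, which by the cyclotomic-coset structure is equivalent to vanishing at the $\delta-1$ consecutive odd powers listed in the Definition of negacyclic BCH codes). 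The $j$-th row of the pre-expansion matrix is $(1,\beta^{b+2j},\beta^{2(b+2j)},\ldots,\beta^{(n-1)(b+2j)})$, so the inner product of $c$ with that row over $\mathbb F_{q^m}$ is exactly $\sum_{\ell=0}^{n-1}c_\ell\beta^{\ell(b+2j)}=c(\beta^{b+2j})$.

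Next I would handle the passage from the $\mathbb F_{q^m}$-matrix to an $\mathbb F_q$-matrix. Fix a basis $\mathcal B=\{b_1,\ldots,b_m\}$ of $\mathbb F_{q^m}$ over $\mathbb F_q$; expanding each entry $\beta^{\ell(b+2j)}$ in this basis replaces row $j$ by $m$ rows over $\mathbb F_q$. By Remark~\ref{rem1} (applied with $l=m$, to the row $u=(\beta^{0\cdot(b+2j)},\ldots,\beta^{(n-1)(b+2j)})$ and the codeword $v=c\in\mathbb F_q^n$), the single equation $c(\beta^{b+2j})=0$ over $\mathbb F_{q^m}$ is equivalent to the $m$ equations obtained by pairing $c$ with the $m$ expanded rows. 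Hence $c H^t=0$ over $\mathbb F_q$ (for the expanded matrix $H$) if and only if $c(\beta^{b+2j})=0$ for every $j$, i.e.\ iff $c\in C$. Finally, deleting linearly dependent rows does not change the (right) null space, so the reduced matrix $H_{\delta,b}$ still has null space exactly $C$; and since $\dim C=n-|\mathcal Z|$, the number of independent rows equals $|\mathcal Z|=n-k$, so $H_{\delta,b}$ is a genuine (full-rank) parity-check matrix.

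The only genuinely delicate point — the rest being the routine Vandermonde/trace-basis bookkeeping above — is justifying the equivalence ``$c(x)\equiv0\bmod g(x)$ iff $c(\beta^{b+2j})=0$ for the $\delta-1$ designed zeros.'' One direction is immediate since those powers are roots of $g$. For the converse one must use that $g(x)=\operatorname{lcm}\{M^{(b)},M^{(b+2)},\ldots,M^{[b+2(\delta-2)]}\}$ is the product of the distinct minimal polynomials, so its full root set is the union of the $q$-ary (here $q^2$-ary, in the Hermitian setting of the later sections, but the statement of this theorem is over $\mathbb F_q$) cyclotomic cosets of the designed exponents; since $c(x)\in\mathbb F_q[x]$, vanishing at one element of a coset forces vanishing at the whole coset, so vanishing at the $\delta-1$ designed powers already forces vanishing at all roots of $g$, hence $g\mid c$ in $\mathbb F_q[x]/(x^n+1)$, i.e.\ $c\in C$. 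I would state this coset-closure fact explicitly and then assemble the three ingredients (Vandermonde identification of rows with polynomial evaluations, Remark~\ref{rem1} for the base-field expansion, coset closure for the BCH defining set) into the final equivalence.
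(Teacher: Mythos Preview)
Your proposal is correct and follows essentially the same approach as the paper: identify the $j$-th row of the pre-expansion matrix with the evaluation $c(\beta^{b+2j})$, invoke Remark~\ref{rem1} to pass from the $\mathbb{F}_{q^m}$-equations to the expanded $\mathbb{F}_q$-matrix, and appeal to the definition of negacyclic BCH codes for the equivalence with membership in $C$. In fact your write-up is more careful than the paper's own proof, which only displays the forward direction explicitly and leaves the converse (the coset-closure argument) and the effect of deleting dependent rows implicit in the phrase ``from Remark~\ref{rem1} and from the definition of BCH negacyclic codes, the result follows.''
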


\begin{proof}
Assume that ${\bf c} =(c_0, c_1, \ldots , c_{n-1}) \in C$. Thus we
have ${\bf c}({{\beta}^{b}})={\bf c}({{\beta}^{b+2}})={\bf
c}({{\beta}^{b+4}})= \ldots ={\bf
c}({{\beta}^{[b+2(\delta-2)]}})=0$, hence

\begin{eqnarray*}
\left[
\begin{array}{ccccc}
1 & {{\beta}^{b}} & {{\beta}^{2b}} & \cdots & {{\beta}^{(n-1)b}} \\
1 & {{\beta}^{(b+2)}} & {{\beta}^{2(b+2)}} & \cdots & {{\beta}^{(n-1)(b+2)}}\\
1 & {{\beta}^{(b+4)}} & {{\beta}^{2(b+4)}} & \cdots & {{\beta}^{(n-1)(b+4)}}\\
\vdots & \vdots & \vdots & \vdots & \vdots\\
1 & {\beta}^{[b+2(\delta-2)]} & {\beta}^{2[b+2(\delta-2)]} & \cdots & {\beta}^{(n-1)[b+2(\delta-2)]}\\
\end{array}
\right]\cdot \left[
\begin{array}{c}
c_0\\
c_1\\
c_2\\
\vdots\\
c_{n-1}\\
\end{array}
\right]= \left[
\begin{array}{c}
0\\
0\\
\vdots\\
0\\
\end{array}
\right]_{({\delta}-1, 1)}.
\end{eqnarray*}
From Remark~\ref{rem1} and from the definition of BCH negacyclic
codes, the result follows.
\end{proof}

Now we are ready to show one of the main results of this section:

\begin{theorem}\label{mainI}
Let $n=q^{2}+1$, where $q\equiv 1$ mod $4$ is a power of an odd
prime and suppose that $s=n/2$. Then there exist MDS convolutional
codes with parameters $(n, n-2i+1, 2 ; 1, 2i+2 {)}_{q^{2}}$, where
$2\leq i\leq n/2-1$.
\end{theorem}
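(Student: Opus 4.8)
The plan is to apply Theorem~\ref{A} (the Piret/Aly construction) to a carefully chosen negacyclic BCH block code of length $n=q^2+1$ over $\mathbb F_{q^2}$, and then to split its parity-check matrix into two pieces $H_0,H_1$ so that the resulting convolutional code has memory $\mu=1$, degree $\gamma=2$, and free distance meeting the generalized Singleton bound. First I would use Lemma~\ref{kai1} to understand the $q^2$-ary cyclotomic cosets modulo $2n$: they are the singletons $\mathcal C_s=\{s\}$, $\mathcal C_{3s}=\{3s\}$ and the pairs $\mathcal C_{s-2j}=\{s-2j,s+2j\}$ for $1\le j\le s-1$. I would then pick a negacyclic BCH code $C$ whose defining set is a union of consecutive odd integers of size $2i$ (so by the BCH bound $d\ge 2i+1$, and since these are negacyclic/almost-MDS length-$(q^2+1)$ codes one checks $d=2i+1$ exactly, i.e.\ $C$ is an $[n,n-2i,2i+1]_{q^2}$ code — this is where the coset structure of Lemma~\ref{kai1} is used, to count $|\mathcal Z|$ precisely as $2i$ and to locate the roots symmetrically about $s$). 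The designed set should be chosen symmetric, say $\{s-(2i-1),\dots,s-1,s+1,\dots,s+(2i-1)\}$ in odd steps, which forces $C$ to be Hermitian dual-containing; alternatively one works with the plain Euclidean setting — I would follow whichever inclusion Theorem~\ref{A}(b) needs, though for the purely classical MDS statement the dual-containment is not strictly required, only the distance bookkeeping in part (c).

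Next I would write down $H_{\delta,b}$ from Theorem~\ref{paritynega} for this code: a $2i\times n$ matrix over $\mathbb F_{q^2}$ whose rows are evaluation vectors $(1,\beta^{r},\beta^{2r},\dots,\beta^{(n-1)r})$ for the $2i$ chosen odd exponents $r$. I would then partition $H=\begin{bmatrix}H_0\\H_1\end{bmatrix}$ by putting $2i-1$ of these rows into $H_0$ and the remaining single row into $H_1$, padding $H_1$ with zero rows to get $\tilde H_1$ of the same row-count $\kappa=2i-1=\operatorname{rk}H_0$; this respects the hypothesis $\operatorname{rk}H_i\le\kappa$ of Theorem~\ref{A}. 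With $\mu=1$ we get the polynomial generator matrix $G(D)=\tilde H_0+\tilde H_1 D$, which by Theorem~\ref{A}(a) is reduced basic, has $n-k=2i-1$ rows so dimension $n-2i+1$, degree $\gamma=\kappa\cdot\mu$-type count giving $\gamma=2$ — more precisely the degree equals the number of nonzero rows of $\tilde H_1$ times the memory, and since $\tilde H_1$ has exactly one nonzero row of degree $1$ contributing, together with the bookkeeping one gets $\gamma=2$ (I would double-check this count against the definition $\gamma=\sum\gamma_i$; the intended reading is that $\gamma_i=1$ for the two row-indices where a degree-one term survives). The memory is $\mu=1$ by construction.

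For the free distance I would invoke Theorem~\ref{A}(c): $d_f\ge d^{\perp}$ where $d^{\perp}$ is the minimum distance of $C^{\perp}$. Since $C$ is an $[n,n-2i,2i+1]_{q^2}$ negacyclic code with $n=q^2+1$, its dual $C^{\perp}$ (or Hermitian dual) is an $[n,2i,d^{\perp}]$ negacyclic code, and using the coset description again one shows $d^{\perp}=n-2i+2$ — i.e.\ the dual is MDS or near-MDS with exactly that distance; this is the analogue of the fact that duals of narrow-sense BCH codes of this length are themselves BCH with a computable designed distance, and Lemma~\ref{kai1} makes the defining set of the dual explicit so the BCH bound is tight. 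Then I would check that $(n,n-2i+1,2;1,2i+2)_{q^2}$ satisfies the generalized Singleton bound~(1) with equality: with $k=n-2i+1$, $n-k=2i-1$, $\gamma=2$, and $\lfloor\gamma/k\rfloor=0$ (since $k>2$ for $i\le n/2-1$), the bound reads $d_f\le (2i-1)(0+1)+2+1=2i+2$, matching the claimed $d_f=2i+2$. Combining $d_f\ge d^{\perp}=n-2i+2$? — here I must be careful: the correct chain is $d_f\ge d^{\perp}$ and also $d_f\le 2i+2$ from Singleton, so I need $d^{\perp}\ge 2i+2$, which for small $i$ relative to $n$ holds since $n-2i+2\ge 2i+2 \iff i\le n/2$. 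So the genuinely delicate point — and the step I expect to be the main obstacle — is verifying the \emph{exact} minimum distances of $C$ and especially of $C^{\perp}$ from the cyclotomic-coset data, i.e.\ showing the BCH lower bounds are attained and that the dual's defining set is a long enough run of consecutive odd residues; everything else is assembling Theorems~\ref{A} and~\ref{paritynega} and a one-line check of~(1).
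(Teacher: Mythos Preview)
Your overall plan---use Theorem~\ref{paritynega} to write down a parity-check matrix, split it, feed the pieces into Theorem~\ref{A}, and verify the generalized Singleton bound---is exactly the paper's strategy. But three concrete mis-steps keep the argument from going through.

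\textbf{(1) You are looking at the wrong code.} The matrix $G(D)$ you build has $\kappa=2i-1$ rows, so the convolutional code $V$ it \emph{generates} has dimension $2i-1$, not $n-2i+1$. The code with parameters $(n,n-2i+1,2;1,2i+2)_{q^2}$ is the \emph{dual} $V^{\perp_h}$. Consequently the relevant clause of Theorem~\ref{A}(c) is not $d_f\ge d^{\perp}$ but the sandwich bound $\min\{d_0+d_{\mu},\,d\}\le d_f^{\perp}\le d$, where $d$ is the minimum distance of the ambient block code whose full parity-check matrix is $H$. This is why your attempt to use $d^{\perp}=n-2i+2$ collided with the Singleton bound $2i+2$: you were bounding the free distance of the small code $V$, not of the large code you actually want.

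\textbf{(2) Your degree is $1$, not $2$.} With only a single row placed in $H_1$, the padded matrix $\tilde H_1$ has exactly one nonzero row, so exactly one $\gamma_i$ equals $1$ and the rest vanish; hence $\gamma=\sum\gamma_i=1$. To obtain $\gamma=2$ you need the $D$-coefficient to carry \emph{two} rows. The paper arranges this by letting the $D$-part be the parity-check matrix $H_{C_0}$ of the code $C_0=\langle M^{(s+2i)}(x)\rangle$: by Lemma~\ref{kai1} the coset $\mathcal C_{s-2i}=\{s-2i,s+2i\}$ has size $2$, so $H_{C_0}$ has rank $2$, and that is what makes $\gamma=2$.

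\textbf{(3) The block code you start from does not fit the coset structure.} A defining set of size exactly $2i$ that is a run of consecutive odd residues centred at $s$ cannot be assembled from the cosets of Lemma~\ref{kai1}: including $\mathcal C_s=\{s\}$ forces odd cardinality, while omitting it breaks the consecutive run. The paper instead takes the ambient block code to be $C_2=\langle M^{(s)}(x)\cdots M^{(s+2i)}(x)\rangle$, whose defining set $\{s-2i,\ldots,s,\ldots,s+2i\}$ has size $2i+1$; the Singleton bound then forces $C_2$ to be the MDS code $[n,n-2i-1,2i+2]_{q^2}$, giving $d=2i+2$ exactly without any delicate dual-distance computation. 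The constant-term piece is the parity-check matrix of $C_1=\langle M^{(s)}(x)\cdots M^{(s+2(i-1))}(x)\rangle$, an $[n,n-2i+1,2i]_{q^2}$ MDS code, so $d_1=2i$ and $d_0\ge 2$. Feeding these into $\min\{d_0+d_1,d\}\le d_f^{\perp_h}\le d$ pins down $d_f^{\perp_h}=2i+2$ immediately.

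In short: take the dual, use the two-sided bound in Theorem~\ref{A}(c), and choose the split so that the $D$-block comes from a size-$2$ coset. With those three fixes your outline becomes the paper's proof.
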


\begin{proof}
First, note that $ \gcd (n, q) = 1$ and ${{ord}_{2n}}(q^{2})=2$.
Let $\beta$ be a primitive $2n$th root of unity in ${\mathbb
F}_{q^{2m}}$. Consider that $C_2$ is the negacyclic BCH code of
length $n$ over ${ \mathbb F}_{q^{2}}$ generated by the product of
the minimal polynomials
\begin{eqnarray*}
C_2 = \langle g_{2} (x)  \rangle= \langle {M}^{(s)}(x)
{M}^{(s+2)}(x) \cdot \ldots \cdot {M}^{(s+2i)}(x) \rangle,
\end{eqnarray*}
where  $2\leq i\leq s-1$.

By Theorem~\ref{paritynega}, a parity check matrix of $C_2$ is
obtained from the matrix
\begin{eqnarray*}
H_{2} = \left[
\begin{array}{ccccc}
1 & {{\beta}^{s}} & {{\beta}^{2s}} & \cdots & {{\beta}^{(n-1)s}} \\
1 & {{\beta}^{(s+2)}} & {{\beta}^{2(s+2)}} & \cdots & {{\beta}^{(n-1)(s+2)}}\\
1 & {{\beta}^{(s+4)}} & {{\beta}^{2(s+4)}} & \cdots & {{\beta}^{(n-1)(s+4)}}\\
\vdots & \vdots & \vdots & \vdots & \vdots\\
1 & {\beta}^{(s+2i)} & {{\beta}^{2(s+2i)}} & \cdots & {\beta}^{(n-1)(s+2i)}\\
\end{array}
\right]
\end{eqnarray*}
by expanding each entry as a column vector (containing $2$ rows)
with respect to some ${ \mathbb F}_{q^{2}}-$basis $\beta$ of ${
\mathbb F}_{q^4}$ and then removing one linearly dependent row.
From Lemma~\ref{kai1}, this new matrix $H_{C_2}$ has rank $2i+1$,
so $C_{2}$ has dimension $n - 2i-1$. From the BCH bound for
negacyclic codes it follows that the minimum distance $d_2$ of
$C_{2}$ satisfies $d_2 \geq 2i+2$. Thus, from the (classical)
Singleton bound, one concludes that $C_2$ is a MDS code with
parameters ${[n, n-2i-1, 2i+2]}_{q^{2}}$ and, consequently, its
Hermitian dual code has dimension $2i+1$.

Next we assume that $C_1$ is the negacyclic BCH code of length $n$
over ${ \mathbb F}_{q^{2}}$ generated by the product of the
minimal polynomials
\begin{eqnarray*}
C_1 =\langle g_1 (x)\rangle =\langle {M}^{(s)}(x) {M}^{(s+2)}(x)
\cdot \ldots \cdot {M}^{[s+2(i-1)]}(x) \rangle,
\end{eqnarray*}

Similarly, by Theorem~\ref{paritynega}, $C_1$ has a parity check
matrix derived from the matrix
\begin{eqnarray*}
H_{1} = \left[
\begin{array}{ccccc}
1 & {{\beta}^{s}} & {{\beta}^{2s}} & \cdots & {{\beta}^{(n-1)s}} \\
1 & {{\beta}^{(s+2)}} & {{\beta}^{2(s+2)}} & \cdots & {{\beta}^{(n-1)(s+2)}}\\
1 & {{\beta}^{(s+4)}} & {{\beta}^{2(s+4)}} & \cdots & {{\beta}^{(n-1)(s+4)}}\\
\vdots & \vdots & \vdots & \vdots & \vdots\\
1 & {\beta}^{[s+2(i-1)]} & {\beta}^{2[s+2(i-1)]} & \cdots & {\beta}^{(n-1)[s+2(i-1)]}\\
\end{array}
\right]
\end{eqnarray*}
by expanding each entry as a column vector with respect to some ${
\mathbb F}_{q^{2}}-$basis $\beta$ of ${ \mathbb F}_{q^4}$ (already
done, since $H_1$ is a submatrix of $H_{2}$) and then removing one
linearly dependent row. From Lemma~\ref{kai1}, this new matrix
$H_{C_1}$ has rank $2i-1$, so $C_{1}$ has dimension $n - 2i+1$.
From the BCH bound for negacyclic codes, the minimum distance
$d_1$ of $C_1$ satisfies $d_1 \geq 2i$, so $C_1$ is an ${[n,
n-2i+1, 2i]}_{q^{2}}$ MDS code. Thus, its Hermitian dual code has
dimension $2i-1$.

Now, let $C_0$ be the negacyclic BCH code of length $n$ over ${
\mathbb F}_{q^{2}}$ generated by the minimal polynomial
${M}^{(s+2i)}(x)$. Then $C_0$ has parameters ${[n, n-2, d_0 \geq
2]}_{q^{2}}$. A parity check matrix $H_{C_0}$ of $C_0$ is given by
expanding the entries of the matrix
\begin{eqnarray*}
H_0 = \left[
\begin{array}{ccccc}
1 & {{\alpha}^{(s+2i)}} & {{\alpha}^{2(s+2i)}} & \cdots & {{\alpha}^{(n-1)(s+2i)}} \\
\end{array}
\right]
\end{eqnarray*}
with respect to $\beta$ (already done, since $H_0$ is a submatrix
of $H_{2}$).

Further, let us construct the convolutional code $V$ generated by
the reduced basic (according to Theorem~\ref{A} Item (a))
generator matrix
\begin{eqnarray*}
G(D)=\tilde H_{C_1}+ \tilde H_{C_0} D,
\end{eqnarray*}
where $\tilde H_{C_1} = H_{C_1}$ and $\tilde H_{C_0}$ is obtained
from $H_{C_0}$ by adding zero-rows at the bottom such that $\tilde
H_{C_0}$ has the number of rows of $H_{C_1}$ in total. By
construction, $V$ is a unit-memory convolutional code of dimension
$2i-1$ and degree ${\delta}_{V} = 2$. We know that the Hermitian
dual $V^{{\perp}_{h}}$ of $V$ has dimension $n-2i+1$ and degree
$2$. By Theorem~\ref{A} Item (c), the free distance of
$V^{{\perp}_{h}}$ is bounded by $\min \{ d_0 + d_1 , d_2 \} \leq
d_{f}^{{\perp}_{h}} \leq d_2$, where $d_i$ is the minimum distance
of the code $C_i = \{ {\bf v}\in { \mathbb F}_q^n \mid {\bf v}
{\tilde H}_{C_i}^t =0 \}$. From construction one has $d_2 = 2i+2$,
$d_1 = 2i$ and $d_0 \geq 2$, so $V^{{\perp}_{h}}$ has parameters
$(n, n-2i+1, 2; 1, 2i+2)_{q^{2}}$. It is easy to see that the
parameters of $V^{{\perp}_{h}}$ satisfies (1) with equality, so
$V^{{\perp}_{h}}$ is MDS.
\end{proof}

Theorem~\ref{mainII} given in the following is the second main
result of this section:

\begin{theorem}\label{mainII}
Let $n=(q^{2}+1)/2$, where $q$ is a power of an odd prime. Then
there exist MDS convolutional codes with parameters $(n, n-2i+2,
2; 1, 2i+1)_{q^{2}}$, where $2\leq i\leq (n-1)/2$.
\end{theorem}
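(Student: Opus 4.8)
The plan is to follow, essentially line by line, the proof of Theorem~\ref{mainI}, now using Lemma~\ref{kai2} in place of Lemma~\ref{kai1}. First I would record the elementary facts that $\gcd(n,q)=1$ and $\mathrm{ord}_{2n}(q^{2})=2$ (since $2n=q^{2}+1$ gives $q^{2}\equiv -1\pmod{2n}$), and fix a primitive $2n$th root of unity $\beta$ in $\mathbb{F}_{q^{4}}$. The three negacyclic BCH codes over $\mathbb{F}_{q^{2}}$ I would use are $C_{2}=\langle M^{(1)}(x)M^{(3)}(x)\cdots M^{(2i-1)}(x)\rangle$, $C_{1}=\langle M^{(1)}(x)M^{(3)}(x)\cdots M^{(2i-3)}(x)\rangle$ and $C_{0}=\langle M^{(2i-1)}(x)\rangle$, with $2\le i\le (n-1)/2$. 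By Lemma~\ref{kai2}, since $2i-1\le n-2$, the cosets $\mathcal C_{1},\mathcal C_{3},\dots,\mathcal C_{2i-1}$ are pairwise disjoint and $\mathcal C_{1}\cup\mathcal C_{3}\cup\cdots\cup\mathcal C_{2i-1}=\{\pm1,\pm3,\dots,\pm(2i-1)\}\pmod{2n}$; reading this set as $\{2n-(2i-1),\dots,2n-1,1,3,\dots,2i-1\}$ and using $\beta^{2n+1}=\beta$, it is a string of $2i$ consecutive odd exponents. Hence $|\mathcal Z_{2}|=2i$, so $\dim C_{2}=n-2i$, and similarly $\dim C_{1}=n-2i+2$ and $\dim C_{0}=n-2$.

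Next I would produce the parity-check matrices via Theorem~\ref{paritynega}: a parity-check matrix $H_{C_{2}}$ of $C_{2}$ is obtained from the matrix whose rows are $(1,\beta^{j},\beta^{2j},\dots,\beta^{(n-1)j})$, $j=1,3,\dots,2i-1$, after expanding entries over an $\mathbb{F}_{q^{2}}$-basis of $\mathbb{F}_{q^{4}}$ and deleting dependent rows, and Lemma~\ref{kai2} gives $\mathrm{rk}\,H_{C_{2}}=2i$. Keeping only the rows $j=1,\dots,2i-3$ yields a submatrix $H_{C_{1}}$ with $\mathrm{rk}\,H_{C_{1}}=2i-2$, and the single row $j=2i-1$ gives a submatrix $H_{C_{0}}$ with $\mathrm{rk}\,H_{C_{0}}=2$. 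The BCH bound for negacyclic codes, applied to the strings of $2i$, $2i-2$ and $2$ consecutive odd exponents respectively, gives $d_{C_{2}}\ge 2i+1$, $d_{C_{1}}\ge 2i-1$ and $d_{C_{0}}\ge 2$; together with the Singleton bound this shows $C_{2}$ is an MDS $[n,n-2i,2i+1]_{q^{2}}$ code and $C_{1}$ is an MDS $[n,n-2i+2,2i-1]_{q^{2}}$ code.

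Then I would form the convolutional code $V$ generated by $G(D)=\tilde H_{C_{1}}+\tilde H_{C_{0}}D$, with $\tilde H_{C_{1}}=H_{C_{1}}$ and $\tilde H_{C_{0}}$ the matrix $H_{C_{0}}$ padded with zero rows so as to have $2i-2$ rows. Because $\mathrm{rk}\,H_{C_{0}}=2\le 2i-2=\mathrm{rk}\,H_{C_{1}}$ — which is exactly where the hypothesis $i\ge 2$ is used — Theorem~\ref{A}(a) shows $G(D)$ is a reduced basic generator matrix, so $V$ is a unit-memory code of dimension $2i-2$ and degree $2$, and its Hermitian dual $V^{{\perp}_{h}}$ has dimension $n-2i+2$ and degree $2$. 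By (the Hermitian version of) Theorem~\ref{A}(c) one has $\min\{d_{C_{0}}+d_{C_{1}},d_{C_{2}}\}\le d_{f}^{{\perp}_{h}}\le d_{C_{2}}$, and since $d_{C_{0}}+d_{C_{1}}\ge 2+(2i-1)=2i+1=d_{C_{2}}$ this forces $d_{f}^{{\perp}_{h}}=2i+1$; thus $V^{{\perp}_{h}}$ has parameters $(n,n-2i+2,2;1,2i+1)_{q^{2}}$. Finally, for these parameters $n-k=2i-2$, $\gamma=2$ and $k=n-2i+2\ge 3$, so $\lfloor\gamma/k\rfloor=0$ and the right-hand side of (1) equals $(2i-2)+2+1=2i+1=d_{f}^{{\perp}_{h}}$; hence $V^{{\perp}_{h}}$ is MDS, which proves the theorem.

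I do not anticipate a genuine obstacle, since the argument merely transcribes that of Theorem~\ref{mainI}. The two spots that need a little attention are: verifying that the union of cyclotomic cosets $\mathcal C_{1}\cup\cdots\cup\mathcal C_{2i-1}$ really is a run of $2i$ \emph{consecutive} odd exponents, which relies on the wrap-around $\beta^{2n-1}\mapsto\beta^{2n+1}=\beta$ combined with Lemma~\ref{kai2}; and the rank inequality $\mathrm{rk}\,H_{C_{0}}\le\mathrm{rk}\,H_{C_{1}}$, which forces $i\ge2$ and hence accounts for the stated range of $i$. As in Theorem~\ref{mainI}, one also tacitly uses that the Hermitian dual of a convolutional code preserves the degree (and the relevant memory bound) and that the bound of Theorem~\ref{A}(c) applies to Hermitian duals.
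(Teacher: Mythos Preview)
Your proposal is correct and follows exactly the approach the paper takes: the paper's own proof simply names the three codes $C_{2}=\langle M^{(1)}\cdots M^{(2i-1)}\rangle$, $C_{1}=\langle M^{(1)}\cdots M^{(2i-3)}\rangle$, $C_{0}=\langle M^{(2i-1)}\rangle$ and says ``proceeding similarly as in the proof of Theorem~\ref{mainI}, the result follows.'' You have filled in the details faithfully, including the use of Lemma~\ref{kai2} in place of Lemma~\ref{kai1}, the wrap-around argument to get a run of $2i$ consecutive odd exponents, and the rank condition $2=\mathrm{rk}\,H_{C_{0}}\le\mathrm{rk}\,H_{C_{1}}=2i-2$ that forces $i\ge2$.
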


\begin{proof}
It suffices to consider $C_2$ be the code generated by $\langle
{M}^{(1)}(x) {M}^{(3)}(x) \cdot \ldots \cdot {M}^{(2i-1)}(x)
\rangle$, where  $2\leq i\leq (n-1)/2$, $C_1$ be the negacyclic
BCH code generated by $\langle g_1 (x)\rangle =\langle
{M}^{(1)}(x) {M}^{(3}(x) \cdot \ldots \cdot {M}^{(2i-3)}(x)
\rangle$ and $C_0$ be the negacyclic BCH code generated by
${M}^{(2i-1)}(x)$. Proceeding similarly as in the proof of
Theorem~\ref{mainI}, the result follows.
\end{proof}

\begin{theorem}\label{mainIII}
Let $n=(q^{2}+1)/2$, where $q\geq 5$ is a power of an odd prime.
Then there exist MDS convolutional codes with parameters $(n,
n-2i+1, 2 ; 1, 2i+2 {)}_{q^{2}}$, where $2\leq i\leq
\frac{(n-1)}{2}-1$.
\end{theorem}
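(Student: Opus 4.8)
The plan is to adapt the proof of Theorem~\ref{mainI} almost line for line, with Lemma~\ref{kai2} replacing Lemma~\ref{kai1}. First I would record the arithmetic set-up: $\gcd(n,q)=1$, and since $q^{2}\equiv-1\pmod{2n}$ one has $\operatorname{ord}_{2n}(q^{2})=2$, so fix a primitive $2n$th root of unity $\beta\in\mathbb{F}_{q^{4}}$; note also that $n$ is odd (indeed $n\equiv1\pmod 4$). From Lemma~\ref{kai2} we get $\mathcal{C}_{n}=\{n\}$ and, for $1\le k\le(n-1)/2$, $\mathcal{C}_{n-2k}=\{\,n-2k,\ n+2k\,\}$ (take $2j-1=n-2k$ in the lemma and reduce $1-2j$ modulo $2n$).

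Next I would introduce three negacyclic BCH codes over $\mathbb{F}_{q^{2}}$ built around the singleton coset $\mathcal{C}_{n}$. Let $C_{2}$ be the code with defining set $\mathcal{Z}_{2}=\mathcal{C}_{n}\cup\mathcal{C}_{n-2}\cup\cdots\cup\mathcal{C}_{n-2i}=\{\,n-2i,\,n-2i+2,\,\ldots,\,n+2i\,\}$; for $2\le i\le(n-1)/2-1$ the hypotheses guarantee that these cosets are pairwise disjoint and $\mathcal{Z}_{2}\subset\mathbb{O}_{2n}$, so $\mathcal{Z}_{2}$ is a string of $2i+1$ consecutive odd residues and $\dim C_{2}=n-2i-1$. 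Likewise let $C_{1}$ have defining set $\mathcal{Z}_{1}=\mathcal{C}_{n}\cup\mathcal{C}_{n-2}\cup\cdots\cup\mathcal{C}_{n-2(i-1)}=\{\,n-2i+2,\ldots,n+2i-2\,\}$, a string of $2i-1$ consecutive odd residues, whence $\dim C_{1}=n-2i+1$; and let $C_{0}=\langle M^{(n-2i)}(x)\rangle$, with defining set $\mathcal{Z}_{0}=\mathcal{C}_{n-2i}$, so $\dim C_{0}=n-2$. The negacyclic BCH bound yields $d(C_{2})\ge2i+2$, $d(C_{1})\ge2i$ and $d(C_{0})\ge2$, and the classical Singleton bound then forces $C_{2}$ to be an $[n,n-2i-1,2i+2]_{q^{2}}$ MDS code and $C_{1}$ an $[n,n-2i+1,2i]_{q^{2}}$ MDS code.

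I would then run the Piret/Aly construction exactly as in Theorem~\ref{mainI}. By Theorem~\ref{paritynega}, build parity-check matrices $H_{C_{2}},H_{C_{1}},H_{C_{0}}$ by expanding each row over an $\mathbb{F}_{q^{2}}$-basis of $\mathbb{F}_{q^{4}}$ and deleting linearly dependent rows; since $\mathcal{Z}_{2}=\mathcal{Z}_{1}\sqcup\mathcal{Z}_{0}$, the matrices $H_{C_{1}}$ and $H_{C_{0}}$ are, up to row operations, the submatrices of $H_{C_{2}}$ attached to those parts, with $\operatorname{rk}H_{C_{2}}=2i+1$, $\operatorname{rk}H_{C_{1}}=2i-1$ and $\operatorname{rk}H_{C_{0}}=2$. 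Put $\kappa=\operatorname{rk}H_{C_{1}}=2i-1$, which dominates $\operatorname{rk}H_{C_{0}}=2$ because $i\ge2$, let $\tilde{H}_{C_{0}}$ be $H_{C_{0}}$ padded with zero rows to $\kappa$ rows, and set $G(D)=\tilde{H}_{C_{1}}+\tilde{H}_{C_{0}}D$. By Theorem~\ref{A}(a), $G(D)$ is reduced basic, and the code $V$ it generates has dimension $\kappa=2i-1$, exactly two rows of degree $1$ and the remaining $2i-3$ rows of degree $0$, hence degree $2$ and memory $1$. Therefore $V^{\perp_{h}}$ has length $n$, dimension $n-2i+1$, degree $2$ and memory $1$. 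By Theorem~\ref{A}(c), $\min\{\,d(C_{0})+d(C_{1}),\,d(C_{2})\,\}\le d_{f}^{\perp_{h}}\le d(C_{2})$, and since $d(C_{0})+d(C_{1})\ge2+2i=2i+2=d(C_{2})$ we conclude $d_{f}^{\perp_{h}}=2i+2$. Finally, with $n-k=2i-1$, $\gamma=2$ and $k=n-2i+1\ge4$ (so that $\lfloor\gamma/k\rfloor=0$), the generalized Singleton bound~(1) evaluates to $2i+2$, which $V^{\perp_{h}}$ attains; thus $V^{\perp_{h}}$ is an MDS convolutional code with parameters $(n,n-2i+1,2;1,2i+2)_{q^{2}}$, and $q\ge5$ is precisely what keeps the range $2\le i\le(n-1)/2-1$ nonempty.

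The part demanding real care is the combinatorics of the $q^{2}$-ary cyclotomic cosets modulo $2n$: one must verify that each chosen defining set is genuinely an interval of consecutive odd residues (so the negacyclic BCH bound delivers the designed distances $2i+2$, $2i$, $2$), that the cosets $\mathcal{C}_{n-2k}$ stay pairwise disjoint and inside $\mathbb{O}_{2n}$ over the whole range $2\le i\le(n-1)/2-1$, and that the rank of each expanded parity-check matrix equals the cardinality of its defining set, so that the dimensions, and hence the Singleton/MDS arguments, go through. Everything else is a faithful transcription of the proof of Theorem~\ref{mainI}.
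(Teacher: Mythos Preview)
Your proposal is correct and follows essentially the same approach as the paper. The paper's own proof simply names the three codes $C_{2}=\langle M^{(n)}(x)M^{(n+2)}(x)\cdots M^{(n+2i)}(x)\rangle$, $C_{1}=\langle M^{(n)}(x)\cdots M^{(n+2i-2)}(x)\rangle$, $C_{0}=\langle M^{(n+2i)}(x)\rangle$ and refers back to the procedure of Theorems~\ref{mainI} and~\ref{mainII}; since $\mathcal{C}_{n+2k}=\mathcal{C}_{n-2k}$ by your reparametrization of Lemma~\ref{kai2}, your defining sets coincide with the paper's, and the remainder of your argument is exactly the detailed execution the paper elides.
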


\begin{proof}
Consider that $C_2$, $C_1$ and $C_0$ are negacyclic BCH codes of
length $n$ over ${ \mathbb F}_{q^{2}}$ generated, respectively by
$\langle g_{2} (x)  \rangle=\langle {M}^{(n)}(x) {M}^{(n+2)}(x)
\cdot \ldots \cdot {M}^{(n+2i)}(x) \rangle,$ $\langle g_1
(x)\rangle =\langle {M}^{(n)}(x) {M}^{(n+2}(x) \cdot \ldots \cdot
{M}^{(n+2i-2)}(x) \rangle,$ and $\langle g_0 (x)\rangle
=\langle{M}^{(n+2i)}(x)\rangle$. Applying the same procedure given
in the proofs of Theorems~\ref{mainI} and \ref{mainII}, the result
follows.
\end{proof}

\section{New Quantum MDS-Convolutional codes}\label{VI}

As in the classical case, the construction of MDS quantum
convolutional codes is a difficult task. This task is performed in
\cite{Grassl:2005,Grassl:2007,Klapp:2007,Forney:2007} but only in
\cite{Grassl:2005,Klapp:2007} the constructions are made
algebraically. Here, we propose the construction of MDS
convolutional stabilizer codes derived from the convolutional
codes constructed in Section~\ref{V}. To proceed further, let us
recall some results available in the literature:

\begin{lemma}\cite[Proposition 2]{Aly:2007}\label{BB}
Let $C$ be an ${(n, (n - k)/2, \gamma; \mu)}_{q^2}$ convolutional
code such that $C \subseteq {C}^{{\perp}_{h}}$. Then there exists
an ${[(n, k, \mu; \gamma, d_f)]}_{q}$ convolutional stabilizer
code, where $d_f=$ wt$({C}^{{\perp}_{h}} \backslash C)$.
\end{lemma}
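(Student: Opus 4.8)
The plan is to lift the standard Hermitian (trace‑symplectic) construction for block codes to the convolutional setting, working over the polynomial rings $\mathbb{F}_{q^2}[D]$ and $\mathbb{F}_q[D]$ in place of $\mathbb{F}_{q^2}$ and $\mathbb{F}_q$. First I would fix an $\mathbb{F}_q$‑basis $\{1,\xi\}$ of $\mathbb{F}_{q^2}$ and recall the $\mathbb{F}_q$‑linear isomorphism $\phi\colon \mathbb{F}_{q^2}^n \to \mathbb{F}_q^{2n}$ for which there is a constant $c\in\mathbb{F}_q^{\ast}$ with $\mathrm{tr}_{q^2/q}\!\big(c\,\langle u\mid v\rangle_h\big)=\langle \phi(u),\phi(v)\rangle_s$, the right‑hand side being the symplectic form on $\mathbb{F}_q^{2n}$. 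Extending $\phi$ coefficientwise gives an $\mathbb{F}_q[D]$‑linear map $\mathbb{F}_{q^2}[D]^n \to \mathbb{F}_q[D]^{2n}$; composing it with the $\mathrm{tr}$‑expansion of the rows of a reduced basic generator matrix $G(D)\in\mathbb{F}_{q^2}[D]^{(n-k)/2\times n}$ of $C$ produces a candidate stabilizer matrix $S(D)=(X(D)\mid Z(D))\in\mathbb{F}_q[D]^{(n-k)\times 2n}$.

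Next I would verify that the hypothesis $C\subseteq C^{\perp_h}$ turns into symplectic self‑orthogonality of $S(D)$. Since $C=\{{\bf u}(D)G(D)\}$, the inclusion $C\subseteq C^{\perp_h}$ is equivalent to $G(D)\,\overline{G(1/D)}^{\,t}=0$, where the bar denotes applying $x\mapsto x^q$ to each entry; here the substitution $D\mapsto 1/D$ plays the role of the conjugate‑transpose used in the block case. Spelling out the trace‑symplectic identity at the level of Laurent polynomials in $D$ then yields exactly the orthogonality condition $X(D)Z(1/D)^t-Z(D)X(1/D)^t=0$ required in Section~\ref{IV}. One also has to check that the $(n-k)/2$ rows over $\mathbb{F}_{q^2}$ expand to $n-k$ rows that are $\mathbb{F}_q[D]$‑independent and that $S(D)$ may be taken reduced basic: the former uses $\mathbb{F}_q$‑independence of $\{1,\xi\}$ together with basicness of $G(D)$, and the latter follows from the same row operations as in Theorem~\ref{A}(a). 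Consequently $S(D)$ defines a quantum convolutional code; reading off its invariants, the frame size is $n$, it has $n-k$ stabilizer generators and hence encodes $k$ logical qudits per frame, and the memory $\mu$ and degree $\gamma$ are inherited unchanged from $C$, giving parameters $[(n,k,\mu;\gamma,d_f)]_q$.

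Finally, for the free distance I would use the general fact that in the stabilizer formalism of Section~\ref{IV} the code detects exactly those error operators whose $\phi$‑image lies outside the symplectic dual of the stabilizer module, so that the nonzero undetectable errors of least weight are those lying in the symplectic dual modulo the stabilizer. Under $\phi$ the stabilizer corresponds to $C$ and its symplectic dual to the image of $C^{\perp_h}$, whence the free distance equals $\mathrm{wt}(C^{\perp_h}\setminus C)$, as claimed. The step I expect to be the main obstacle is the middle one: checking that coefficientwise $\mathrm{tr}$‑expansion commutes correctly with the $D\mapsto 1/D$ substitution, so that Hermitian orthogonality over $\mathbb{F}_{q^2}[D]$ becomes genuine symplectic orthogonality over $\mathbb{F}_q[D]$, while simultaneously preserving full rank and the reduced‑basic property so that $(\mu,\gamma)$ really are the parameters of $C$; the remaining parts are a routine transcription of the block‑code Hermitian construction.
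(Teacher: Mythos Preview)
The paper does not supply its own proof of this lemma: it is quoted as Proposition~2 of \cite{Aly:2007} and used as a black box in Theorems~\ref{mainIV} and~\ref{mainV}. Your sketch follows exactly the route taken in that reference---lifting the block-code Hermitian/trace-symplectic correspondence to the polynomial modules $\mathbb{F}_{q^2}[D]^n$ and $\mathbb{F}_q[D]^{2n}$ coefficientwise---so there is no alternative argument in the paper to compare against, and the technical step you flag (that $G(D)\overline{G(1/D)}^{\,t}=0$ becomes $X(D)Z(1/D)^{t}-Z(D)X(1/D)^{t}=0$ under the basis expansion while basicness is preserved) is indeed the crux in \cite{Aly:2007}.

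One point in your sketch deserves tightening. You assert that $\mu$ and $\gamma$ are ``inherited unchanged from $C$.'' The memory $\mu$ clearly survives, since the $\mathbb{F}_q$-expansion of a polynomial entry does not change its degree. But for the overall constraint length, expanding each of the $(n-k)/2$ rows of $G(D)$ into two $\mathbb{F}_q[D]$-rows (via $r\mapsto \phi(r),\,\phi(\xi r)$) produces $n-k$ rows, and a row of degree $\gamma_i$ yields two rows each of degree $\gamma_i$; a naive count therefore gives $2\gamma$ rather than $\gamma$. To recover the stated degree you must argue that a suitable reduced basic form of $S(D)$ has the claimed row-degree sum, which is where the actual work in \cite{Aly:2007} lies. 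This is not a gap in strategy---your plan is the right one---but the phrase ``inherited unchanged'' hides precisely the computation that needs to be done.
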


\begin{theorem}\label{SingC}\cite{Klapp:2007}
(Quantum Singleton bound) The free distance of an $[(n, k, \mu;
\gamma,$ $d_{f}){]}_{q}$ ${ \mathbb F}_{q^{2}}$-linear pure
convolutional stabilizer code is bounded by
\begin{eqnarray*}
d_{f} \leq \frac{n-k}{2} \left(\left\lfloor
\frac{2\gamma}{n+k}\right \rfloor + 1\right) + \gamma + 1.
\end{eqnarray*}
\end{theorem}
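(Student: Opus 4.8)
The plan is to reduce the statement to the classical generalized Singleton bound (1) applied to the Hermitian dual of the classical convolutional code that underlies the stabilizer. First I would pass from the quantum side to the classical side: an ${\mathbb F}_{q^{2}}$-linear convolutional stabilizer code with parameters $[(n, k, \mu; \gamma, d_{f})]_{q}$ is, via the symplectic-to-Hermitian correspondence (the construction of Lemma~\ref{BB} read in the reverse direction), associated with a classical convolutional code $C \subseteq {\mathbb F}_{q^{2}}[D]^{n}$ having parameters $(n, (n-k)/2, \gamma; \mu)_{q^{2}}$ and satisfying $C \subseteq C^{\perp_{h}}$, together with the identity $d_{f} = \operatorname{wt}(C^{\perp_{h}} \setminus C)$. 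The hypothesis of ${\mathbb F}_{q^{2}}$-linearity is exactly what is needed here: it guarantees that the ${\mathbb F}_{q}$-symplectic code of length $2n$ generated by $S(D)$ is ${\mathbb F}_{q^{2}}$-linear, hence identifiable with a Hermitian code of length $n$, and that its ${\mathbb F}_{q}$-dimension $n-k$ is even, so that $C$ has ${\mathbb F}_{q^{2}}$-dimension $(n-k)/2$ and $C^{\perp_{h}}$ has dimension $n-(n-k)/2=(n+k)/2$.

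Next I would record the structural fact that makes the classical bound applicable, namely that the degree of $C^{\perp_{h}}$ is again $\gamma$. This is the convolutional analogue of $\dim C + \dim C^{\perp} = n$: the dual of an $(n, K, \Gamma)$ convolutional code is an $(n, n-K, \Gamma)$ convolutional code (Forney's structure theorem), and passing from the Euclidean to the Hermitian dual only applies the conjugation $x \mapsto x^{q}$ to the entries, which leaves the degrees of polynomial entries unchanged. Hence $C^{\perp_{h}}$ is an $(n, (n+k)/2, \gamma;\cdot)_{q^{2}}$ convolutional code.

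Then I would use purity to identify the quantum free distance with the classical free distance of $C^{\perp_{h}}$. Since $C \subseteq C^{\perp_{h}}$, every nonzero word of $C^{\perp_{h}}$ lies either in $C$ or in $C^{\perp_{h}} \setminus C$, so $\operatorname{wt}(C^{\perp_{h}}) = \min\{\operatorname{wt}(C), \operatorname{wt}(C^{\perp_{h}} \setminus C)\}$. Purity says that the stabilizer $C$ contains no nonzero word of weight less than $d_{f}$, i.e. $\operatorname{wt}(C) \geq d_{f} = \operatorname{wt}(C^{\perp_{h}} \setminus C)$, whence $\operatorname{wt}(C^{\perp_{h}}) = d_{f}$; that is, the free distance of the classical convolutional code $C^{\perp_{h}}$ equals $d_{f}$. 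Applying the classical generalized Singleton bound (1) to $C^{\perp_{h}}$, whose dimension is $(n+k)/2$ and whose degree is $\gamma$, then gives
\[
d_{f} \;\leq\; \left(n - \frac{n+k}{2}\right)\left(\left\lfloor \frac{\gamma}{(n+k)/2} \right\rfloor + 1\right) + \gamma + 1 \;=\; \frac{n-k}{2}\left(\left\lfloor \frac{2\gamma}{n+k} \right\rfloor + 1\right) + \gamma + 1,
\]
which is the asserted inequality.

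The step I expect to be the main obstacle is the degree-preservation claim for the Hermitian dual: one must verify that a \emph{reduced basic} generator matrix of $C^{\perp_{h}}$ has overall constraint length equal to $\gamma$, which genuinely uses the module-theoretic definition of a convolutional code together with the minimality of the generator matrix — the block-code identity $\dim C + \dim C^{\perp} = n$ does not by itself control the degree. A secondary point requiring care is the precise dictionary between ${\mathbb F}_{q^{2}}$-linear pure convolutional stabilizer codes and Hermitian self-orthogonal classical convolutional codes, in particular that $d_{f} = \operatorname{wt}(C^{\perp_{h}} \setminus C)$ is indeed the relevant notion of free distance on the quantum side; once these two ingredients are granted, the remainder of the argument is the bookkeeping carried out above.
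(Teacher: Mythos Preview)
The paper does not supply a proof of this statement: Theorem~\ref{SingC} is simply quoted from \cite{Klapp:2007} and used as a black box, so there is no in-paper argument against which to compare your proposal. That said, your reduction is precisely the standard one carried out in the cited reference --- pass from the ${\mathbb F}_{q^{2}}$-linear stabilizer to a Hermitian self-orthogonal classical convolutional code $C$ of dimension $(n-k)/2$ and degree $\gamma$, use purity to identify $d_{f}$ with the free distance of $C^{\perp_{h}}$, invoke Forney's duality theory to conclude that $C^{\perp_{h}}$ again has degree $\gamma$ and dimension $(n+k)/2$, and then apply the classical bound~(1). Your identification of the degree-preservation step for the dual as the one genuinely nontrivial ingredient is accurate; the remainder is indeed bookkeeping.
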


\begin{lemma}\cite{Kai:2013}\label{negaself1}
Let $n=q^{2}+1$, where $q\equiv 1$ (mod $4$) is a power of an odd
prime and suppose that $s=n/2$. If $C$ is a $q^{2}$-ary negacyclic
code of length $n$ with defining set ${ \mathcal
Z}=\displaystyle\cup_{i=0}^{\delta} {\mathcal C}_{s-2i}$, where
$0\leq \delta \leq (q-1)/2$, then $C^{{\perp}_{h}}\subseteq C$.
\end{lemma}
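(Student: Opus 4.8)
The plan is to reduce the inclusion $C^{\perp_h}\subseteq C$ to a purely combinatorial statement about defining sets and then verify that statement using Lemma~\ref{kai1} together with the hypothesis $q\equiv 1\pmod 4$.

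First I would recall the standard description of the Hermitian dual of a negacyclic code in terms of defining sets (see, e.g., \cite{Kai:2013,Blackford:2008,Aydin:2001}): if $C$ is a $q^{2}$-ary negacyclic code of length $n$ with defining set $\mathcal Z\subseteq\mathbb O_{2n}$, then $C^{\perp_h}$ has defining set $\mathbb O_{2n}\setminus(-q)\mathcal Z$, where $(-q)\mathcal Z=\{-qz\bmod 2n\mid z\in\mathcal Z\}$. Since, for negacyclic codes, $C_{1}\subseteq C_{2}$ is equivalent to $\mathcal Z(C_{2})\subseteq\mathcal Z(C_{1})$, it follows that $C^{\perp_h}\subseteq C$ if and only if $\mathcal Z\cap(-q)\mathcal Z=\emptyset$. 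Thus the whole proof reduces to checking that this intersection is empty for $\mathcal Z=\bigcup_{i=0}^{\delta}\mathcal C_{s-2i}$.

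Next, by Lemma~\ref{kai1} one has $\mathcal C_{s}=\{s\}$ and $\mathcal C_{s-2i}=\{s-2i,\,s+2i\}$ for $1\le i\le s-1$, so $\mathcal Z=\{\,s+2j\bmod 2n\mid -\delta\le j\le\delta\,\}$ and hence $(-q)\mathcal Z=\{\,-q(s+2j)\bmod 2n\mid -\delta\le j\le\delta\,\}$. Suppose, towards a contradiction, that some residue belongs to both sets; then there are integers $j,j'\in\{-\delta,\dots,\delta\}$ with
\[
s+2j\equiv -q(s+2j')\pmod{2n},\qquad\text{equivalently}\qquad (1+q)s+2(j+qj')\equiv 0\pmod{2n}.
\]
At this point the hypothesis $q\equiv 1\pmod 4$ enters: then $(1+q)/2$ is odd and $n=2s$, so $(1+q)s=\tfrac{1+q}{2}\,n\equiv n\pmod{2n}$. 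Therefore $2(j+qj')\equiv -n\equiv n\pmod{2n}$, and dividing by $2$ (legitimate since $n=q^{2}+1$ is even) yields $j+qj'\equiv s\pmod n$. As $n=2s$, this forces $j+qj'$ to be an odd multiple of $s$, so $|j+qj'|\ge s$. On the other hand,
\[
|j+qj'|\le (1+q)\delta\le (1+q)\cdot\frac{q-1}{2}=\frac{q^{2}-1}{2}=s-1<s,
\]
a contradiction. Hence $\mathcal Z\cap(-q)\mathcal Z=\emptyset$ and $C^{\perp_h}\subseteq C$.

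I expect the only delicate points to be the two modular facts rather than anything structural: the congruence $(1+q)s\equiv n\pmod{2n}$ genuinely uses $q\equiv 1\pmod 4$ (not merely $q$ odd), and the inequality $|j+qj'|\le s-1$ is tight, attained at $\delta=(q-1)/2$, so the conclusion rests on the strict bound $s-1<s$. I would therefore verify the congruence $(1+q)s\equiv n\pmod{2n}$ carefully at the outset, since the entire argument downstream depends on it, and keep track of the range of $\delta$ so that the final inequality remains strict.
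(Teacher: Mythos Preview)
Your argument is correct. The reduction to $\mathcal Z\cap(-q)\mathcal Z=\emptyset$ via the defining-set description of the Hermitian dual is standard, and the subsequent arithmetic is sound: the congruence $(1+q)s\equiv n\pmod{2n}$ indeed follows from $q\equiv 1\pmod 4$ (so that $(1+q)/2$ is odd), and the bound $|j+qj'|\le (1+q)\delta\le (q^{2}-1)/2=s-1$ is exactly what is needed to contradict $j+qj'\equiv s\pmod n$.

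Note, however, that the present paper does not give its own proof of this lemma: it is stated with a citation to \cite{Kai:2013} and used as a black box. The argument you wrote is essentially the one in that reference (Kai and Zhu), which also proceeds by showing $\mathcal Z\cap(-q)\mathcal Z=\emptyset$ through the same size estimate. So there is nothing to compare against in this paper, but your proof matches the original source and is complete as written.
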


\begin{lemma}\cite{Kai:2013}\label{negaself2}
Let $n=(q^{2}+1)/2$, where $q$ is a power of an odd prime. If $C$
is a $q^{2}$-ary negacyclic code of length $n$ with defining set
${ \mathcal Z}=\displaystyle\cup_{i=1}^{\delta} {\mathcal
C}_{2i-1}$, where $1\leq \delta \leq (q-1)/2$, then
$C^{{\perp}_{h}}\subseteq C$.
\end{lemma}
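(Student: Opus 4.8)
The plan is to reduce the assertion $C^{{\perp}_{h}}\subseteq C$ to a purely combinatorial condition on the defining set ${\mathcal Z}$ and then to verify that condition by an interval-disjointness computation. First I would record the Hermitian-duality criterion for negacyclic codes. Writing $\langle {\bf u}\mid{\bf v}\rangle_{h}=\langle {\bf u}\mid {\bf v}^{q}\rangle$ for the underlying Euclidean pairing, one has $C^{{\perp}_{h}}=(C^{q})^{\perp}$, where $C^{q}=\{{\bf v}^{q}\mid {\bf v}\in C\}$. A direct computation with the roots, using $v^{q}(\beta^{k})=(v(\beta^{k/q}))^{q}$, shows that the defining set of $C^{q}$ is $q{\mathcal Z}=\{qz \bmod 2n\mid z\in{\mathcal Z}\}$, while the Euclidean dual of a negacyclic code with defining set $T$ has defining set ${\mathbb O}_{2n}\setminus(-T)$ (the reciprocal of the check polynomial negates exponents, and ${\mathbb O}_{2n}$ is stable under negation mod $2n$). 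Combining these gives $\mathcal Z(C^{{\perp}_{h}})={\mathbb O}_{2n}\setminus(-q{\mathcal Z})$, so that $C^{{\perp}_{h}}\subseteq C$ holds if and only if ${\mathcal Z}\subseteq {\mathbb O}_{2n}\setminus(-q{\mathcal Z})$, i.e. if and only if ${\mathcal Z}\cap(-q{\mathcal Z})=\emptyset$.

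Next I would make ${\mathcal Z}$ explicit. By Lemma~\ref{kai2} each coset has the form ${\mathcal C}_{2i-1}=\{2i-1,\,1-2i\}=\{2i-1,\,2n-(2i-1)\}$, so ${\mathcal Z}=\bigcup_{i=1}^{\delta}{\mathcal C}_{2i-1}$ is symmetric under negation mod $2n$: it is the union of the $\delta$ smallest odd residues $\{1,3,\dots,2\delta-1\}$ and their negatives $\{2n-2\delta+1,\dots,2n-1\}=\{q^{2}-2\delta+2,\dots,q^{2}\}$. Because ${\mathcal Z}=-{\mathcal Z}$, we have $-q{\mathcal Z}=q{\mathcal Z}$ and the criterion simplifies to $q{\mathcal Z}\cap{\mathcal Z}=\emptyset$; since ${\mathcal Z}$ is a union of full cosets, it then suffices to check that for each $i$ the representative $q(2i-1)\bmod 2n$ does not lie in ${\mathcal Z}$ (the other coset member $-q(2i-1)$ is excluded automatically by symmetry).

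Here the congruence $q^{2}\equiv -1 \pmod{2n}$ coming from $2n=q^{2}+1$ handles the bookkeeping. For $1\le i\le\delta$ the bound $\delta\le(q-1)/2$ gives $2i-1\le q-2<q$, hence $q\le q(2i-1)\le q(q-2)=q^{2}-2q<2n$, so $q(2i-1)\bmod 2n=q(2i-1)$ lies in the interval $[q,\,q^{2}-2q]$. This interval is disjoint from both pieces of ${\mathcal Z}$: its lower end $q$ exceeds the top $2\delta-1\le q-2$ of the small residues, and its upper end $q^{2}-2q$ is strictly below $q^{2}-q+3\le q^{2}-2\delta+2$, the bottom of the reflected block. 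Hence $q(2i-1)\notin{\mathcal Z}$ for every $i$, giving $q{\mathcal Z}\cap{\mathcal Z}=\emptyset$ and therefore $C^{{\perp}_{h}}\subseteq C$. The main obstacle is pinning down the correct duality criterion $\mathcal Z(C^{{\perp}_{h}})={\mathbb O}_{2n}\setminus(-q{\mathcal Z})$ with the right sign and $q$-twist; once that is in hand, the hypothesis $\delta\le(q-1)/2$ is exactly the margin that forces the two intervals apart, so the concluding estimate is routine.
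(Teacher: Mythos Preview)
Your argument is correct. The duality criterion ${\mathcal Z}\cap(-q{\mathcal Z})=\emptyset$ for Hermitian dual-containment of negacyclic codes is the right tool, your derivation of it is sound, and the interval computation using $2n=q^{2}+1$ and $\delta\le(q-1)/2$ is clean: the multiples $q(2i-1)$ fall in $[q,\,q^{2}-2q]$, which indeed misses both $\{1,\dots,2\delta-1\}$ and $\{q^{2}-2\delta+2,\dots,q^{2}\}$.

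Regarding comparison with the paper: the paper does not actually prove this lemma. It is stated with a citation to Kai and Zhu (2013) and used as a black box in the proof of Theorem~\ref{mainV}. So your write-up supplies strictly more than the paper does here; the approach you give (reduce to the coset-disjointness criterion, exploit the symmetry ${\mathcal Z}=-{\mathcal Z}$ coming from Lemma~\ref{kai2}, then a direct interval estimate) is essentially the standard one and matches what appears in the cited source.
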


Now, we are able to show the following two results, in which new
families of quantum convolutional MDS codes are constructed:

\begin{theorem}\label{mainIV}
Let $n=q^{2}+1$, where $q\equiv 1$ (mod $4$) is a power of an odd
prime and suppose that $s=n/2$. Then there exist quantum MDS
convolutional codes with parameters ${[(n, n-4i+2, 1; 2,
2i+2)]}_{q}$, where $2\leq i \leq (q-1)/2$.
\end{theorem}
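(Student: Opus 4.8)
The plan is to mimic the construction in the proof of Theorem~\ref{mainI}, but now over $\mathbb{F}_{q^2}$ choosing the block codes so that the Hermitian self-orthogonality hypothesis of Lemma~\ref{BB} is satisfied, and then to invoke the quantum Singleton bound of Theorem~\ref{SingC} to establish optimality. Concretely, with $s=n/2$ and $n=q^2+1$, I would take $C_2$ to be the $q^2$-ary negacyclic code with defining set $\mathcal{Z}_2=\bigcup_{j=0}^{i}\mathcal{C}_{s-2j}$ and $C_1$ the one with defining set $\mathcal{Z}_1=\bigcup_{j=0}^{i-1}\mathcal{C}_{s-2j}$, where $2\le i\le (q-1)/2$. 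Since $i\le (q-1)/2$, Lemma~\ref{negaself1} gives $C_2^{\perp_h}\subseteq C_2$ (and likewise $C_1^{\perp_h}\subseteq C_1$). By Lemma~\ref{kai1} the cosets $\mathcal{C}_{s-2j}=\{s-2j,s+2j\}$ have size $2$ for $j\ge 1$ and $\mathcal{C}_s=\{s\}$, so $|\mathcal{Z}_2|=2i+1$ and $|\mathcal{Z}_1|=2i-1$; hence $C_2$ is $[n,n-2i-1,\ge 2i+2]_{q^2}$ and $C_1$ is $[n,n-2i+1,\ge 2i]_{q^2}$, both MDS by the negacyclic BCH bound together with the classical Singleton bound, exactly as in Theorem~\ref{mainI}. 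I would also set $C_0$ to be the negacyclic code generated by $M^{(s+2i)}(x)$, with parameters $[n,n-2,\ge 2]_{q^2}$, whose parity-check matrix $H_{C_0}$ is a row of the parity-check matrix $H_{C_2}$.

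Next I would form the convolutional code $V$ over $\mathbb{F}_{q^2}$ generated by $G(D)=\tilde H_{C_1}+\tilde H_{C_0}D$ as in Theorem~\ref{A}, so that $V$ is unit-memory ($\mu=1$) of dimension $2i-1$ and degree $\gamma=2$, and its Hermitian dual $V^{\perp_h}$ has dimension $n-2i+1$ and degree $2$. Because $C_1^{\perp_h}\subseteq C_1$, Theorem~\ref{A} Item (b) yields $V\subseteq V^{\perp_h}$, so $V$ is a $q^2$-ary convolutional code with $V\subseteq V^{\perp_h}$. Applying Lemma~\ref{BB} with this $V$ (here the ``$(n-k)/2$'' of the lemma is $2i-1$, so the quantum dimension is $n-4i+2$, the memory is $\mu=1$ and the degree is $\gamma=2$) produces a convolutional stabilizer code with parameters $[(n,n-4i+2,1;2,d_f)]_q$, where $d_f=\mathrm{wt}(V^{\perp_h}\setminus V)$. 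By the free-distance estimate of Theorem~\ref{A} Item (c), $d_f^{\perp_h}$ for $V^{\perp_h}$ satisfies $\min\{d_0+d_1,d_2\}\le d_f^{\perp_h}\le d_2$; since $d_0\ge 2$, $d_1=2i$, $d_2=2i+2$, we get $d_f^{\perp_h}=2i+2$, and one checks $d_f=\mathrm{wt}(V^{\perp_h}\setminus V)=2i+2$ as well (the weight on the coset equals the free distance of $V^{\perp_h}$ here because $V$ itself has minimum weight bounded below by $d^{\perp}$ of $C_1$, which is large enough not to interfere — this is the same reasoning as in the classical theorems).

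Finally I would verify optimality by plugging $n-k=n-(n-4i+2)=4i-2$, $\gamma=2$ into the quantum Singleton bound of Theorem~\ref{SingC}: the bound reads $d_f\le \frac{n-k}{2}(\lfloor 2\gamma/(n+k)\rfloor+1)+\gamma+1$. Since $n+k=2n-4i+2$ is much larger than $2\gamma=4$ for the relevant range of $i$ and $n$, the floor term vanishes, giving $d_f\le (2i-1)+2+1=2i+2$, which matches the achieved free distance; hence the code is MDS. The main obstacle I expect is the free-distance computation, i.e.\ confirming that $\mathrm{wt}(V^{\perp_h}\setminus V)$ (and not merely $d_f^{\perp_h}$ of $V^{\perp_h}$) equals $2i+2$: one must argue that no codeword of $V$ of smaller weight lies in $V^{\perp_h}$, which relies on the minimum distance $d^{\perp}$ of $C_1^{\perp_h}$ being at least $2i+2$ — this follows because $C_1$ is MDS of dimension $n-2i+1$, so $C_1^{\perp_h}$ has dimension $2i-1$ and minimum distance $n-2i+2\ge 2i+2$ in the stated range, ensuring the weight of $V\setminus\{0\}$ exceeds $2i+1$. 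Everything else is a routine transcription of the arguments already used in Theorems~\ref{mainI}--\ref{mainIII}.
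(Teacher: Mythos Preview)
Your proposal follows essentially the same route as the paper: invoke the classical construction of Theorem~\ref{mainI} to obtain $V$ and $V^{\perp_h}$, use Lemma~\ref{negaself1} together with Theorem~\ref{A}(b) to get $V\subset V^{\perp_h}$, apply Lemma~\ref{BB} to pass to a stabilizer code, and then saturate the bound of Theorem~\ref{SingC}. Two small corrections are in order. First, the hypothesis of Theorem~\ref{A}(b) refers to the block code whose \emph{full} parity-check matrix is being split; here that matrix is $H_{C_2}$, so the relevant containment is $C_2^{\perp_h}\subseteq C_2$ (which you already noted via Lemma~\ref{negaself1}), not $C_1^{\perp_h}\subseteq C_1$. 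Second, your worry about $\mathrm{wt}(V^{\perp_h}\setminus V)$ is phrased backwards: since $V\subset V^{\perp_h}$, every codeword of $V$ automatically lies in $V^{\perp_h}$, and one has $\mathrm{wt}(V^{\perp_h}\setminus V)\ge \mathrm{wt}(V^{\perp_h})=2i+2$ trivially; the genuine point (which the paper glosses over) is \emph{purity} for Theorem~\ref{SingC}, i.e.\ $\mathrm{wt}(V)\ge d_f$, and this follows from Theorem~\ref{A}(c) via $\mathrm{wt}(V)\ge d(C_2^{\perp_h})=n-2i\ge 2i+2$ in the stated range. With these adjustments your argument is correct and matches the paper's.
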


\begin{proof}
We consider the same notation utilized in Theorem~\ref{mainI}.
From Theorem~\ref{mainI}, there exists a classical convolutional
MDS code $V^{{\perp}_{h}}$ with parameters $(n, n-2i+1, 2; 1,
2i+2)_{q^2}$, for each $2\leq i\leq n/2-1$. This code is the
Hermitian dual of the code $V$ with parameters $(n, 2i-1, 2; 1,
d_f)_{q^2}$. From Lemma~\ref{negaself1} and from Theorem~\ref{A}
Item (b), one has $V \subset V^{{\perp}_{h}}$. Applying
Lemma~\ref{BB}, there exists an ${[(n, n-4i+2, 1; 2, d_f\geq
2i+2)]}_{q}$ convolutional stabilizer code ${ \mathcal Q}$, for
each $2\leq i \leq (q-1)/2$. Replacing the parameters of ${
\mathcal Q}$ in Theorem~\ref{SingC}, the result follows.
\end{proof}

\begin{theorem}\label{mainV}
Let $n=(q^{2}+1)/2$, where $q\geq 7$ is a power of an odd prime.
Then there exist quantum MDS convolutional codes with parameters
$[(n, n-4i+4, 1; 2,$ $2i+1){]}_{q}$, where $2\leq i \leq (q-1)/2$.
\end{theorem}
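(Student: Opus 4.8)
The plan is to mirror the argument used in Theorem~\ref{mainIV}, now invoking Theorem~\ref{mainIII} and Lemma~\ref{negaself2} in place of Theorem~\ref{mainI} and Lemma~\ref{negaself1}. First I would recall the setup from the proof of Theorem~\ref{mainIII}: with $n=(q^{2}+1)/2$ and $q\geq 5$, one takes the negacyclic BCH codes $C_{2}=\langle {M}^{(n)}(x){M}^{(n+2)}(x)\cdots {M}^{(n+2i)}(x)\rangle$, $C_{1}=\langle {M}^{(n)}(x){M}^{(n+2)}(x)\cdots {M}^{(n+2i-2)}(x)\rangle$, and $C_{0}=\langle {M}^{(n+2i)}(x)\rangle$, and forms the convolutional code $V$ with generator matrix $G(D)=\tilde H_{C_{1}}+\tilde H_{C_{0}}D$. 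From that proof, $V$ has parameters $(n,2i-1,2;1,d_{f})_{q^{2}}$ and its Hermitian dual $V^{{\perp}_{h}}$ is the MDS convolutional code $(n,n-2i+1,2;1,2i+2)_{q^{2}}$. Note the parameter range there is $2\le i\le (n-1)/2-1$, which comfortably contains $2\le i\le (q-1)/2$ since $(n-1)/2-1=(q^{2}-3)/4$.

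Next I would verify the nestedness $V\subset V^{{\perp}_{h}}$ needed to apply Lemma~\ref{BB}. By Theorem~\ref{A} Item (b), it is enough to check $C_{1}^{{\perp}_{h}}\subset C_{1}$ for the underlying block code $C_{1}$. Since the cyclotomic coset of $n$ is $\{n\}$ and, by Lemma~\ref{kai2}, ${\mathcal C}_{2j-1}=\{2j-1,\,1-2j\}$, one can rewrite the defining set of $C_{1}$ (a string of consecutive odd powers starting at $n$) as a union $\bigcup_{j=1}^{\delta}{\mathcal C}_{2j-1}$ with $\delta=i-1$; here one uses that $n$ is odd (as $q$ is odd, $q^{2}+1\equiv 2\pmod 4$, so $n=(q^{2}+1)/2$ is odd) and that the window $\{n,n+2,\dots,n+2i-2\}$ of odd residues mod $2n$ coincides with $\{1,3,\dots,2\delta-1\}$ after reduction. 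The condition $1\le\delta\le (q-1)/2$ becomes $1\le i-1\le(q-1)/2$, i.e. $2\le i\le (q+1)/2$, which covers our range $2\le i\le(q-1)/2$. Lemma~\ref{negaself2} then gives $C_{1}^{{\perp}_{h}}\subseteq C_{1}$, hence $V\subset V^{{\perp}_{h}}$.

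Now apply Lemma~\ref{BB} with the role of $C$ played by $V$: here the frame size is $n$, the dimension of $V$ is $2i-1=(n-k)/2$ so $k=n-4i+2$, the memory is $\mu=1$ and the degree $\gamma=2$. Thus there exists a convolutional stabilizer code ${\mathcal Q}$ with parameters $[(n,\,n-4i+2,\,1;\,2,\,d_{f})]_{q}$ and $d_{f}=\mathrm{wt}(V^{{\perp}_{h}}\setminus V)\geq d_{f}(V^{{\perp}_{h}})=2i+2$.

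Hmm, wait — I should double-check whether the announced quantum parameters are $n-4i+2$ or $n-4i+4$. The statement of Theorem~\ref{mainV} lists dimension $n-4i+4$, whereas the direct translation above yields $n-4i+2$. This discrepancy is the main obstacle, and I would resolve it by using, instead of Theorem~\ref{mainIII}, the classical code coming from Theorem~\ref{mainII}: there $C_{2}=\langle {M}^{(1)}(x)\cdots {M}^{(2i-1)}(x)\rangle$, $C_{1}=\langle {M}^{(1)}(x)\cdots {M}^{(2i-3)}(x)\rangle$, $C_{0}=\langle {M}^{(2i-1)}(x)\rangle$, giving $V$ with dimension $2i-2$ and $V^{{\perp}_{h}}$ with parameters $(n,n-2i+2,2;1,2i+1)_{q^{2}}$. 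Then $(n-k)/2=2i-2$ forces $k=n-4i+4$ and the free distance of ${\mathcal Q}$ is $d_{f}\geq 2i+1$, matching Theorem~\ref{mainV}. The nestedness $C_{1}^{{\perp}_{h}}\subseteq C_{1}$ follows from Lemma~\ref{negaself2} exactly as above, now with $\delta=i-2$, requiring $1\le i-2\le (q-1)/2$, i.e. $3\le i\le(q+3)/2$; the case $i=2$ (where $C_{1}$ is generated by the empty product, i.e. $C_{1}$ is the whole space and $C_{1}^{{\perp_h}}=\{0\}\subseteq C_{1}$ trivially) is handled separately, and the upper bound is consistent with $2\le i\le(q-1)/2$. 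Finally, substituting $n=(q^{2}+1)/2$, $k=n-4i+4$, $\mu=1$, $\gamma=2$ into the quantum Singleton bound of Theorem~\ref{SingC} gives $\frac{n-k}{2}(\lfloor 2\gamma/(n+k)\rfloor+1)+\gamma+1=(2i-2)(0+1)+2+1=2i+1$ once one checks $2\gamma=4<n+k$ (true since $q\ge 7$), so $d_{f}\le 2i+1$. Combined with $d_{f}\ge 2i+1$ this yields $d_{f}=2i+1$ and purity, so ${\mathcal Q}$ is an MDS quantum convolutional code with parameters $[(n,n-4i+4,1;2,2i+1)]_{q}$, completing the proof. The only genuinely delicate point is the bookkeeping that identifies which classical construction (Theorem~\ref{mainII} vs.\ \ref{mainIII}) produces the stated dimension and that the coset-window reindexing needed for Lemma~\ref{negaself2} is valid throughout the claimed range of $i$, together with the hypothesis $q\ge 7$ needed both for the Singleton-bound floor to vanish and for the index range $2\le i\le(q-1)/2$ to be nonempty.
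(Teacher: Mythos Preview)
Your approach---after the self-correction to use Theorem~\ref{mainII} rather than Theorem~\ref{mainIII}---is exactly the paper's: take the classical MDS convolutional code $(n,n-2i+2,2;1,2i+1)_{q^{2}}$ from Theorem~\ref{mainII}, use Lemma~\ref{negaself2} together with Theorem~\ref{A}(b) to obtain $V\subset V^{{\perp}_{h}}$, apply Lemma~\ref{BB}, and then saturate the quantum Singleton bound of Theorem~\ref{SingC}.

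There is, however, a small bookkeeping slip in your verification of self-orthogonality. First, the defining set of $C_{1}=\langle M^{(1)}(x)\cdots M^{(2i-3)}(x)\rangle$ is $\bigcup_{j=1}^{i-1}\mathcal{C}_{2j-1}$, so $\delta=i-1$, not $i-2$; in particular for $i=2$ one has $C_{1}=\langle M^{(1)}(x)\rangle$, which is \emph{not} the whole space. Second, Theorem~\ref{A}(b) concerns the block code $C$ whose parity-check matrix is the full stacked matrix $H$, which here is $C_{2}$, not $C_{1}$; the defining set of $C_{2}$ is $\bigcup_{j=1}^{i}\mathcal{C}_{2j-1}$, so the correct value is $\delta=i$, and Lemma~\ref{negaself2} applies precisely for $1\le i\le (q-1)/2$, matching the stated range on the nose.
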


\begin{proof}
From Theorem~\ref{mainII}, there exists a classical convolutional
MDS code $V^{{\perp}_{h}}$ with parameters $(n, n-2i+2, 2; 1,
2i+1)_{q^2}$, for each $2\leq i\leq (n-1)/2$. This code is the
Hermitian dual of the code $V$ with parameters $(n, 2i-2, 2; 1,
d_f)_{q^2}$. From Lemma~\ref{negaself2} and from Theorem~\ref{A}
Item (b), one has $V \subset V^{{\perp}_{h}}$. Applying
Lemma~\ref{BB}, there exists a convolutional stabilizer code ${
\mathcal Q}$ with parameters ${[(n, n-4i+4, 1; 2, d_f\geq
2i+1)]}_{q}$, for each $2\leq i \leq (q-1)/2$. Replacing the
parameters of ${ \mathcal Q}$ in Theorem~\ref{SingC}, the result
follows.
\end{proof}

In the following we present Tables~\ref{table1} and \ref{table2},
containing the parameters of some new convolutional codes and some
new quantum convolutional codes, respectively, constructed in this
paper. Recall the these codes are optimal in the sense the they
attain the classical (quantum) generalized Singleton bound.

\begin{table}[!hpt]
\begin{center}
\caption{Classical MDS \label{table1}}
\begin{tabular}{|c |}
\hline  New convolutional codes\\
\hline $(n, n-2i+1, 2; 1, 2i+2{)}_{q^{2}}$, $q\equiv 1$( mod $4$), $n=q^{2}+1$, $2\leq i\leq n/2 -1$\\
\hline ${(26, 23, 2; 1, 6)}_{25}$\\
\hline ${(26, 21, 2; 1, 8)}_{25}$\\
\hline ${(26, 19, 2; 1, 10)}_{25}$\\
\hline ${(26, 9, 2; 1, 20)}_{25}$\\
\hline ${(26, 7, 2; 1, 22)}_{25}$\\
\hline ${(26, 5, 2; 1, 24)}_{25}$\\
\hline ${(82, 63, 2; 1, 22)}_{81}$\\
\hline ${(82, 53, 2; 1, 32)}_{81}$\\
\hline ${(82, 43, 2; 1, 42)}_{81}$\\
\hline ${(82, 23, 2; 1, 62)}_{81}$\\
\hline ${(82, 13, 2; 1, 72)}_{81}$\\
\hline
\hline $(n, n-2i+2, 2; 1, 2i+1{)}_{q^{2}}$, $n=(q^{2}+1)/2$, $2\leq i\leq (n-1)/2$\\
\hline ${(5, 3, 2; 1, 5)}_{9}$\\
\hline ${(25, 23, 2; 1, 5)}_{49}$\\
\hline ${(25, 21, 2; 1, 7)}_{49}$\\
\hline ${(25, 19, 2; 1, 9)}_{49}$\\
\hline ${(25, 17, 2; 1, 11)}_{49}$\\
\hline ${(25, 15, 2; 1, 13)}_{49}$\\
\hline ${(25, 13, 2; 1, 15)}_{49}$\\
\hline ${(25, 11, 2; 1, 17)}_{49}$\\
\hline ${(25, 7, 2; 1, 21)}_{49}$\\
\hline
\hline $(n, n-2i+1, 2 ; 1, 2i+2 {)}_{q^{2}}$, $q\geq 5$, $n=(q^{2}+1)/2$ $2\leq i\leq (n-1)/2-1$\\
\hline ${(13, 10, 2; 1, 6)}_{25}$\\
\hline ${(13, 8, 2; 1, 8)}_{25}$\\
\hline ${(13, 6, 2; 1, 10)}_{25}$\\
\hline ${(13, 4, 2; 1, 12)}_{25}$\\
\hline ${(25, 16, 2; 1, 12)}_{49}$\\
\hline ${(25, 10, 2; 1, 18)}_{49}$\\
\hline ${(25, 4, 2; 1, 24)}_{49}$\\
\hline ${(41, 38, 2; 1, 6)}_{81}$\\
\hline ${(41, 32, 2; 1, 12)}_{81}$\\
\hline ${(41, 24, 2; 1, 20)}_{81}$\\
\hline ${(41, 4, 2; 1, 40)}_{81}$\\
\hline ${(61, 32, 2; 1, 32)}_{121}$\\
\hline ${(61, 22, 2; 1, 42)}_{121}$\\
\hline ${(61, 4, 2; 1, 60)}_{121}$\\
\hline
\end{tabular}
\end{center}
\end{table}

\begin{table}[!hpt]
\begin{center}
\caption{Quantum MDS \label{table2}}
\begin{tabular}{|c |}
\hline  New convolutional stabilizer codes\\
\hline $[(n, n-4i+2, 1; 2, 2i+2{)]}_{q}$, $q\equiv 1$( mod $4$), $n=q^{2}+1$, $2\leq i\leq (q-1)/2$\\
\hline
\hline ${[(26, 20, 2; 1, 6)]}_{5}$\\
\hline ${[(82, 80, 2; 1, 4)]}_{9}$\\
\hline ${[(82, 76, 2; 1, 6)]}_{9}$\\
\hline ${[(82, 72, 2; 1, 8)]}_{9}$\\
\hline ${[(82, 68, 2; 1, 10)]}_{9}$\\
\hline ${[(170, 168, 2; 1, 4)]}_{13}$\\
\hline ${[(170, 164, 2; 1, 6)]}_{13}$\\
\hline ${[(170, 160, 2; 1, 8)]}_{13}$\\
\hline ${[(170, 156, 2; 1, 10)]}_{13}$\\
\hline ${[(170, 152, 2; 1, 12)]}_{13}$\\
\hline ${[(170, 148, 2; 1, 14)]}_{13}$\\
\hline
\hline $[(n, n-4i+4, 2; 1, 2i+1{)]}_{q}$, $n=(q^{2}+1)/2$, $2\leq i\leq (q-1)/2$\\
\hline ${[(25, 21, 2; 1, 5)]}_{7}$\\
\hline ${[(25, 17, 2; 1, 7)]}_{7}$\\
\hline ${[(61, 57, 2; 1, 5)]}_{11}$\\
\hline ${[(61, 53, 2; 1, 7)]}_{11}$\\
\hline ${[(61, 49, 2; 1, 9)]}_{11}$\\
\hline ${[(61, 45, 2; 1, 11)]}_{11}$\\
\hline ${[(145, 141, 2; 1, 5)]}_{17}$\\
\hline ${[(145, 137, 2; 1, 7)]}_{17}$\\
\hline ${[(145, 133, 2; 1, 9)]}_{17}$\\
\hline ${[(145, 129, 2; 1, 11)]}_{17}$\\
\hline ${[(145, 125, 2; 1, 13)]}_{17}$\\
\hline ${[(145, 121, 2; 1, 15)]}_{17}$\\
\hline ${[(145, 117, 2; 1, 17)]}_{17}$\\
\hline
\end{tabular}
\end{center}
\end{table}

%%%%%%%%%%%%%%%%%%%%%%%%%%%%%%%%%%%%%%%%%%%%%%%%%%%%%%%%%%%%%%%%%%%%%%%%%%%%%%%%%%%%%%%
%%%%%%%%%%%%%%%%%%%%%%%%%%%%%%%%%%%%%%%%%%%%%%%%%%%%%%%%%%%%%%%%%%%%%%%%%%%%%%%%%%%%%%%
%%%%%%%%%%%%%%%%%%%%%%%%%%%%%%%%%%%%%%%%%%%%%%%%%%%%%%%%%%%%%%%%%%%%%%%%%%%%%%%%%

\section{Summary}\label{VII}
In this paper we have constructed new families of classical and
quantum MDS-convolutional codes derived from negacyclic codes. All
the constructions presented here are performed algebraically and
not by exhaustively computational search. The results obtained in
this paper show that the class of negacyclic codes is also a good
source in the search for optimal codes.

\section*{Acknowledgment}
This research has been partially supported by the Brazilian
Agencies CAPES and CNPq.

\small

\textbf{Giuliano G. La Guardia received the M.S. degree in pure
mathematics in 1998 and the Ph.D. degree in electrical engineering
in 2008, both from the State University of Campinas (UNICAMP),
Brazil. Since 1999, he has been with the Department of Mathematics
and Statistics, State University of Ponta Grossa, where he is an
Associate Professor. His research areas include theory of
classical and quantum codes, matroid theory, and error analysis.}

\end{document}